\newtheorem{theorem}{Theorem}[section]
\newtheorem{lemma}[theorem]{Lemma}
\theoremstyle{definition}
\newtheorem{proposition}[theorem]{Proposition}
\theoremstyle{remark}
\newtheorem{remark}[theorem]{Remark}
\numberwithin{equation}{section}
\begin{document}
\title{On a Random Matrix Models of Quantum Relaxation}
\author{J. L. Lebowitz}
\address{Department of Mathematics, Rutgers University, USA}
\thanks{Department of Physics of Rutgers University}

\author{A. Lytova}


\author{L. Pastur}
\address{Mathematical Division, Institute for Low Temperatures \\
Kharkiv, Ukraine}
\subjclass{Primary , ; Secondary , }


\keywords{Random matrix}

\begin{abstract}
In paper \cite{Le-Pa:04} two of us (J.L. and L.P.) considered a
matrix model for a two-level system interacting with a $n\times n$
reservoir and assuming that the interaction is modelled by a
random matrix. We presented there a formula for the reduced
density matrix in the limit $n\rightarrow \infty $ as well as
several its properties and asymptotic forms in various regimes. In
this paper we give the proofs of assertions, announced in
\cite{Le-Pa:04}. We present also a new fact about the model (see
Theorem \ref{t:sa}) as well as additional discussions of topics of
\cite{Le-Pa:04}
\end{abstract}

\maketitle

\section{Introduction}

\label{s:intr}

The model considered in \cite{Le-Pa:04} can be viewed as a random
matrix version of the spin-boson model, widely used in studies of
open quantum systems (see e.g. review works \cite{Li-Co,We:99}
and references therein). We mention here that one of the first
models of this type, namely the model where the classical system
is represented by a harmonic oscillator coupled linearly with the
oscillator reservoir, was considered by N. Bogolyubov in 1945
\cite{Bo:45}, Chapter IV.

We recall now the model, proposed and discussed in \cite{Le-Pa:04}. Let $%
h_{n}$ be a Hermitian $n\times n$ matrix with eigenvalues $%
E_{j}^{(n)},\;j=1,...,n$. We characterize the spectrum of $h_{n}$
by its normalized counting measure of eigenvalues
\begin{equation}
\nu _{0}^{(n)}(\Delta )=n^{-1}\sum_{j=1}^{n}\chi _{\Delta
}(E_{j}^{(n)});\quad \int_{-\infty }^{\infty }\nu
_{0}^{(n)}(dE)=1,  \label{nun}
\end{equation}%
where $\chi _{\Delta }$ is the indicator of an interval $\Delta
\subset \mathbb{R}$. We assume that $\nu _{0}^{(n)}$ converges
weakly as $n\rightarrow \infty $ to a limiting probability
measure $\nu _{0}$, i.e. that for any bounded and continuous
function $\varphi :\mathbb{R}\rightarrow \mathbb{R}$ we have:
\begin{equation}
\lim_{n\rightarrow \infty }\int_{-\infty }^{\infty }\nu
_{0}^{(n)}(dE)\varphi (E)=\int_{-\infty }^{\infty }\nu
_{0}(dE)\varphi (E),\quad \;\int_{-\infty }^{\infty }\nu
_{0}(dE)=1.  \label{nu}
\end{equation}%
Let $w_{n}$ be a Hermitian $n\times n$ random matrix, whose
probability density is
\begin{equation}
Q_{n}^{-1}\exp \left\{ -\mathrm{Tr\;}w^{2}_n/2\right\} ,
\label{GUE}
\end{equation}%
where $Q_{n}$ is the normalization constant. In other words, the entries $%
w_{jk},\;1\leq j\leq k\leq n$ of the matrix $w_{n}$ are
independent Gaussian random variables with
\begin{equation}
\left\langle w_{jk}\right\rangle =0,\;\left\langle
w_{jj}{}^{2}\right\rangle =1,\;j,k=1,...,n,\;\left\langle (\Re
w_{jk})^{2}\right\rangle =\left\langle (\Im
w_{jk})^{2}\right\rangle =\frac{1}{2},\;j\neq k, \label{G}
\end{equation}%
where the symbol $\left\langle ...\right\rangle $ denotes here
and below the expectation with respect to the distribution
(\ref{GUE}). This probability distribution is known as the
Gaussian Unitary Ensemble (GUE) \cite{Me:91}.

We define the Hamiltonian of our composite system $\mathcal{S}%
_{2,n} $ as a random $2n\times 2n$ matrix of the form
\begin{equation}
H^{(n)}=s\sigma ^{z}\otimes \mathbf{1}_{n}+\mathbf{1}_{2}\otimes
h_{n}+v\sigma ^{x}\otimes w_{n}/n^{1/2},  \label{Ham}
\end{equation}%
where $\mathbf{1}_{l}$ ($l=2,n$) is the $l\times l$ unit matrix,
$\sigma ^{z} $ and $\sigma ^{x}$ are the Pauli matrices
\begin{equation*}
\sigma ^{x}=\left(
\begin{array}{ll}
0 & 1 \\
1 & 0%
\end{array}%
\right) ,\;\sigma ^{z}=\left(
\begin{array}{ll}
1 & 0 \\
0 & -1%
\end{array}%
\right) ,
\end{equation*}%
the symbol $\otimes $ denotes the tensor product, and $s$ and $v$
are positive parameters.
The first term in (\ref{Ham}) is the Hamiltonian of the two-level sytem $%
\mathcal{S}_{2}$, the second term is the Hamiltonian of the
$n$-level system (reservoir) $\mathcal{S}_{n}$, and the third
term is an interaction between them. Thus $s$ determines the
energy scale of the isolated small system ($2s$ is its level
spacing), and $v$ plays the role of the coupling constant between
$\mathcal{S}_{2}$ and $\mathcal{S}_{n}$. We write the Hamiltonian
$H^{(n)}$ in the form
\begin{equation}
H^{(n)}=H_{0}^{(n)}+M^{(n)},  \label{Hn}
\end{equation}%
where%
\begin{equation*}
H_{0}^{(n)}=s\sigma ^{z}\otimes
\mathbf{1}_{n}+\mathbf{1}_{2}\otimes h_{n},\quad M^{(n)}=v\sigma
^{x}\otimes w_{n}/n^{1/2},
\end{equation*}%
and choose the basis in $\mathbb{C}^{2}\mathbb{\otimes C}^{n}$ in
which the matrix $H_{0}^{(n)}$ is diagonal:
\begin{equation}
(H_{0}^{(n)})_{\alpha j,\beta k}=\lambda _{\alpha j}^{(n)}\delta
_{\alpha \beta }\delta _{jk},\quad \lambda _{\alpha
j}^{(n)}=E_{j}^{(n)}+\alpha s,\quad \alpha ,\beta =\pm ,\quad
j,k=1,..,n.  \label{Hnmat}
\end{equation}%
Assume that at $t=0$ the density matrix of the composite system $\mathcal{S}%
_{2,n}$ is
\begin{equation}
\mu _{m}^{(n)}(E_{k}^{(n)},0)=\rho (0)\otimes P_{k},  \label{ini}
\end{equation}%
where $\rho (0)$ is a $2\times 2$ positive definite matrix of
unit trace and $P_{k}$ is the projection on the state of energy
$E_{k}^{(n)}$ of the reservoir. Let $\mu ^{(n)}(t)$ be the
density matrix of the composite system $\mathcal{S}_{2,n}$ at
time $t$, corresponding to the initial density matrix $\mu
_{m}^{(n)}(0)$ of (\ref{ini}):
\begin{equation}
\mu ^{(n)}(t)=U^{(n)}(-t)\mu
_{m}^{(n)}(E_{k}^{(n)},0)U^{(n)}(t),\quad
U^{(n)}(t)=e^{itH^{(n)}}.  \label{mun}
\end{equation}%
Then the reduced density matrix of the small system is defined as
\begin{equation}
\widehat{\rho }_{\alpha ,\delta
}^{(n)}(E_{k}^{(n)},t)=\sum_{j=1}^{n}\mu _{\alpha j,\delta
j}^{(n)}(t),\quad \alpha ,\delta =\pm ,  \label{rdmo}
\end{equation}%
i.e. $\widehat{\rho }^{(n)}$ is obtained from the density matrix
(\ref{mun}) of the whole composite (closed) system by tracing out
the reservoir degrees of freedom. It follows from Theorem
\ref{t:sa} below that the variance of
the reduced density matrix vanishes as $n\rightarrow \infty $, i.e. that $%
\widehat{\rho }^{(n)}$ is selfaveraging. This allows us to
confine ourselves to the study of the mean reduced density matrix
$\rho ^{(n)}(E_{k}^{(n)},t)$:
\begin{equation}
{\rho }_{\alpha ,\delta
}^{(n)}(E_{k}^{(n)},t)=\sum_{j=1}^{n}\left\langle \mu _{\alpha
j,\delta j}^{(n)}(t)\right\rangle =\sum_{\beta ,\gamma =\pm
}T_{\alpha \beta \gamma \delta }^{(n)}(E_{k}^{(n)},t)\rho _{\beta
,\gamma }(0),  \label{ron}
\end{equation}%
where
\begin{equation}
T_{\alpha \beta \gamma \delta
}^{(n)}(E_{k}^{(n)},t)=\sum_{j=1}^{n}\left\langle U_{\alpha
j,\beta k}^{(n)}(-t)+U_{\gamma k,\delta j}^{(n)}(t)\right\rangle
\label{per}
\end{equation}%
is the \textquotedblright transfer\textquotedblright\ matrix, an
analog of the influence functional by Feynman-Vernon \cite{We:99}.

Notice that we can equally consider the factorized initial
condition $\mu _{c}^{(n)}(\beta ,0)$ in which the microcanonical
distribution $P_{k}$ of the reservoir is replaced by its
canonical distribution $e^{-\beta
H_{n}^{(0)}}/Z_{0}^{(n)}$. We have evidently%
\begin{equation}
\mu _{c}^{(n)}(\beta ,0)=\sum_{k=1}^{n}e^{-\beta E_{k}^{(n)}}\mu
_{m}^{(n)}(E_{k}^{(n)},0)/Z_{0}^{(n)}.  \label{inic}
\end{equation}%
It is also easy to write the corresponding reduced density matrix.

\section{Selfaveraging of Reduced Density Matrix}

\label{s:sa}

\begin{theorem}
\label{t:sa} Let $\widehat{\rho }^{(n)}(E_{k}^{(n)},t)$ be
the reduced density matrix (\ref{rdmo}) of the composite system $%
S_{2,n}=S_{2}+S_{n}$, given by (\ref{nun}) -- (\ref{mun}). Then
we have
\begin{equation}
\mathbf{Var}\biggl \{\widehat{\rho }_{\alpha ,\delta }^{(n)}(E_{k}^{(n)},t)\biggr \}%
\leq \frac{8v^{2}t^{2}}{n},\;\alpha ,\delta =\pm .  \label{var}
\end{equation}
\end{theorem}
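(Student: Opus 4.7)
The strategy is to apply the Gaussian Poincar\'e inequality for the GUE measure (\ref{GUE}) and reduce the bound to a deterministic estimate on the Hilbert--Schmidt norm of the gradient of $F:=\widehat{\rho}^{(n)}_{\alpha,\delta}(E_k^{(n)}, t)$ viewed as a function of the random matrix $w_n$.

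First I would check that, with the parametrization of $w_n$ by the independent real Gaussians $w_{jj}$ (variance $1$) and $\Re w_{jk},\Im w_{jk}$ for $j<k$ (variance $1/2$), the Gaussian Poincar\'e inequality collapses to the intrinsic form
\[
\mathbf{Var}\{F\}\leq\bigl\langle\,\|\nabla F\|^2_{HS}\,\bigr\rangle,
\]
where $\nabla F$ is the Hermitian gradient of $F$ relative to the Hilbert--Schmidt inner product $(A,B)\mapsto\mathrm{Tr}(AB)$ on Hermitian $n\times n$ matrices; the Poincar\'e weights $1$ and $1/2$ combine exactly with the Leibniz coefficients arising from treating $w_{jk}$ and $w_{kj}=\overline{w_{jk}}$ as a conjugate pair.

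Next I would rewrite $F$ as a trace: from $\mu^{(n)}(t)=U^{(n)}(-t)(\rho(0)\otimes P_k)U^{(n)}(t)$,
\[
F=\mathrm{Tr}\bigl[(\rho(0)\otimes P_k)\,U^{(n)}(-t)\,((|\delta\rangle\langle\alpha|)\otimes \mathbf{1}_n)\,U^{(n)}(t)\bigr].
\]
Duhamel's formula $\partial_N e^{itH^{(n)}}=i\int_0^t e^{i(t-\tau)H^{(n)}}N\,e^{i\tau H^{(n)}}\,d\tau$ applied to both unitaries, combined with the chain rule through the linear map $w\mapsto H_0^{(n)}+(v/\sqrt{n})\,\sigma^x\otimes w$, yields $\nabla F=(v/\sqrt n)\,\mathrm{Tr}_{\mathcal{S}_2}[(\sigma^x\otimes \mathbf{1}_n)\,\mathcal{G}]$, where $\mathcal{G}$ is a Hermitian $2n\times 2n$ matrix written as a time integral of two Duhamel terms.

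Finally I would bound $\|\nabla F\|_{HS}^2$ deterministically using three elementary ingredients: (i) the partial trace inequality $\|\mathrm{Tr}_{\mathcal{S}_2}A\|_{HS}^2\leq 2\|A\|_{HS}^2$ (with $2=\dim\mathcal{S}_2$); (ii) the unitary invariance $\|(\sigma^x\otimes \mathbf{1}_n)\mathcal{G}\|_{HS}=\|\mathcal{G}\|_{HS}$; and (iii) the submultiplicativity $\|U_1 A U_2\,(\rho(0)\otimes P_k)\,U_3\|_{HS}\leq\|A\|_{op}\|\rho(0)\otimes P_k\|_{HS}$ for unitary $U_i$, together with $\|(|\delta\rangle\langle\alpha|)\otimes\mathbf{1}_n\|_{op}=1$ and $\|\rho(0)\otimes P_k\|_{HS}\leq 1$ (from $\mathrm{Tr}\rho(0)^2\leq\mathrm{Tr}\rho(0)=1$ and $\|P_k\|_{HS}=1$). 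The time integral supplies a factor $t$ and the two Leibniz terms a factor $2$, so $\|\mathcal{G}\|_{HS}\leq 2t$, giving $\|\nabla F\|_{HS}^2\leq(v^2/n)\cdot 2\cdot(2t)^2=8v^2t^2/n$ pointwise, whence (\ref{var}) is immediate. The only real obstacle is the bookkeeping needed to land on the constant $8$: the $1/\sqrt n$ in the coupling, the $2=\dim\mathcal{S}_2$ from the partial trace, and the $2$ from the two Leibniz terms must combine to give exactly $8$ with no spurious factors; once the intrinsic Poincar\'e form is set up, nothing else is delicate.
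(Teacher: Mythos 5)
Your proposal is correct and rests on the same two pillars as the paper's proof --- the Gaussian Poincar\'e inequality for the GUE weight (\ref{GUE}) and the Duhamel differentiation of $U^{(n)}(t)$ --- but the packaging is genuinely different and, in one respect, cleaner. The paper applies the Poincar\'e--Nash inequality (\ref{PN}) not to $\widehat{\rho}_{\alpha,\delta}^{(n)}$ itself but to each of the four bilinear forms $\sum_{j}U_{\gamma k,\delta j}^{(n)}(t)U_{\alpha j,\beta k}^{(n)}(-t)$, bounds the entrywise gradient by the explicit index manipulations (\ref{v})--(\ref{UU}) (where unitarity enters as $\sum_{\kappa,l}|U_{\gamma k,\kappa l}^{(n)}|^{2}=1$), obtains $4v^{2}t^{2}/n$ for each, and only then reassembles $\widehat{\rho}_{\alpha,\delta}^{(n)}=\sum_{\beta,\gamma}\rho_{\beta\gamma}(0)\,(\cdots)$ using positivity and unit trace of $\rho(0)$; that last recombination is the tersest step of the published argument. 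You instead apply the inequality once, directly to $F=\widehat{\rho}_{\alpha,\delta}^{(n)}$ written as a trace against $\rho(0)\otimes P_{k}$, and replace the index computations by coordinate-free norm estimates (unitary invariance of $\|\cdot\|_{HS}$, submultiplicativity against $\|\,|\delta\rangle\langle\alpha|\otimes\mathbf{1}_{n}\|_{op}=1$ and $\|\rho(0)\otimes P_{k}\|_{HS}\leq 1$, and the partial-trace bound with the factor $\dim\mathcal{S}_{2}=2$). Your accounting of the constants checks out: the weights $1$ and $1/2$ in (\ref{G}) do collapse to $\sum_{l,m}|\partial F/\partial w_{lm}|^{2}$, i.e.\ to $C=\mathbf{1}_{n^{2}}$ exactly as the paper uses, and $(v^{2}/n)\cdot 2\cdot(2t)^{2}=8v^{2}t^{2}/n$ lands on (\ref{var}) without any final triangle-inequality step over the entries of $\rho(0)$. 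What your route buys is that the constant $8$ emerges in one pass and the positivity of $\rho(0)$ is used only through $\mathrm{Tr}\,\rho(0)^{2}\leq 1$; what the paper's route buys is the intermediate bound $4v^{2}t^{2}/n$ on the individual transfer-matrix components, which is reused in spirit for Lemma \ref{l:vars} and the estimate (\ref{ro<}).
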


\noindent To prove the theorem we use the following facts.

\begin{proposition}
\label{p:PN} (Poincare-Nash inequality). If a random Gaussian vector $%
X=\{\xi _{j}\}_{j=1}^{p}$ satisfies conditions $\langle \xi
_{j}\rangle =0$,
$\langle \xi _{j}\bar{\xi _{k}}\rangle =C_{jk}$, $j,k=1,..,p$, and functions $%
\Phi _{1,2}:\;\mathbb{R}^{p}\rightarrow \mathbb{C}$ have bounded
partial derivatives, then
\begin{align}
\mathbf{Cov}\{\Phi _{1},\Phi _{2}\}&:=\langle(\Phi _{1},\Phi
_{2})\rangle-\langle\Phi _{1}\rangle\langle\Phi _{2}\rangle
\label{PN}
\\
&\leq \langle (C\nabla \Phi _{1},\nabla \Phi _{1})\rangle
^\frac{1}{2}\langle (C\nabla \Phi _{2},\nabla \Phi _{2})\rangle
^\frac{1}{2}, \notag
\end{align}
where
\begin{equation*}
(C\nabla \Phi ,\nabla \Phi )=\sum_{j,k=1}^{p}C_{jk}(\nabla \Phi
)_{j}\bar{(\nabla \Phi )_{k}}.
\end{equation*}
\end{proposition}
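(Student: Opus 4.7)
The plan is to apply the Poincar\'e--Nash inequality (Proposition \ref{p:PN}) with $\Phi_1=\Phi_2=\widehat{\rho}^{(n)}_{\alpha,\delta}(E_k^{(n)},t)$, regarded as a smooth function of the independent Gaussian entries of the GUE matrix $w_n$. The entire argument then reduces to a deterministic bound
\[
\sum_{j,k=1}^n \bigl|\partial_{w_{jk}}\widehat{\rho}^{(n)}_{\alpha,\delta}\bigr|^2 \le \frac{8v^2t^2}{n},
\]
so the expectation on the right of (\ref{PN}) is trivial; it is the factor $n^{-1/2}$ in front of $w_n$ in (\ref{Ham}) that ultimately yields the $n^{-1}$ in (\ref{var}).

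To expose the derivative structure cleanly, I would first pass to the Heisenberg picture. With $A_0=|\delta\rangle\langle\alpha|\otimes\mathbf{1}_n$ and $A(t)=U^{(n)}(t)A_0 U^{(n)}(-t)$, definition (\ref{rdmo}) gives $\widehat{\rho}^{(n)}_{\alpha,\delta}=\mathrm{Tr}\bigl[A(t)\,(\rho(0)\otimes P_k)\bigr]$. Since $\partial_{w_{jk}}H^{(n)}=(v/n^{1/2})\,\sigma^x\otimes E_{jk}$ (with $E_{jk}$ the $(j,k)$ matrix unit), a Duhamel computation gives
\[
\partial_{w_{jk}}A(t)=i\int_0^t U^{(n)}(t-s)\bigl[\partial_{w_{jk}}H^{(n)},A(s)\bigr]U^{(n)}(s-t)\,ds,
\]
and cyclicity of the trace yields $\partial_{w_{jk}}\widehat{\rho}^{(n)}_{\alpha,\delta}=i\int_0^t\mathrm{Tr}\bigl[\partial_{w_{jk}}H^{(n)}\,[A(s),\mu^{(n)}(t-s)]\bigr]\,ds$.

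Plugging in the explicit form of $\partial_{w_{jk}}H^{(n)}$, the inner trace becomes $(v/n^{1/2})$ times a specific $(\beta'k,\alpha'j)$-entry of $[A(s),\mu^{(n)}(t-s)]$ contracted against $\sigma^x$. Cauchy--Schwarz on the time integral, followed by summation over $j,k$ and the two Pauli indices, converts the sum of squared matrix entries into the Hilbert--Schmidt norm $\|[A(s),\mu^{(n)}(t-s)]\|_{HS}^2$. One then uses $\|[X,Y]\|_{HS}\le 2\|X\|\,\|Y\|_{HS}$ together with two elementary facts: $\|A(s)\|=\|A_0\|=1$ by unitarity of $U^{(n)}$, and $\|\mu^{(n)}(t-s)\|_{HS}^2=\mathrm{Tr}(\rho(0)^2)\le 1$ because $\mu^{(n)}$ has the same spectrum as the initial state $\rho(0)\otimes P_k$. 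The constants then multiply out to $8v^2t^2/n$, matching (\ref{var}) exactly after invoking Proposition \ref{p:PN}.

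The main obstacle, and the reason the bound is non-trivial, is the second ingredient above: the partial trace on the reservoir in (\ref{rdmo}) forbids a naive operator-norm estimate (which would lose the crucial factor $n$), so one must work with Hilbert--Schmidt norms and exploit the purity/trace-one property of the initial reservoir state. A secondary bookkeeping step is to handle the Hermiticity constraint on $w_n$ when forming $\nabla\Phi$, so that the covariance $C$ in (\ref{PN}) cooperates with the commutator structure and produces the stated constant $8$; this is routine but must be tracked carefully, in particular the separate contributions from the real diagonal entries and the complex off-diagonal entries.
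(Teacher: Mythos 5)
Your proposal does not prove the statement under review. The statement is Proposition \ref{p:PN} itself --- the Gaussian Poincar\'e--Nash covariance inequality --- whereas what you have written is an argument that \emph{invokes} Proposition \ref{p:PN} as a black box in order to derive the variance bound (\ref{var}) of Theorem \ref{t:sa}. That is circular as a proof of the proposition: at no point do you engage with its actual content, namely why, for a centered Gaussian vector with covariance $C$, the covariance of $\Phi_1$ and $\Phi_2$ is controlled by $\langle (C\nabla\Phi_1,\nabla\Phi_1)\rangle^{1/2}\langle (C\nabla\Phi_2,\nabla\Phi_2)\rangle^{1/2}$. (For the record, the paper does not prove this proposition either; it cites Bogachev, Theorem 1.6.4, so there is no internal proof to compare against.)

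A genuine proof would have to use the Gaussian structure in an essential way. The standard route is interpolation: take $X'$ an independent copy of $X$, set $X_\theta=\theta X+\sqrt{1-\theta^2}\,X'$, write $\mathbf{Cov}\{\Phi_1,\Phi_2\}=\int_0^1\tfrac{d}{d\theta}\langle\Phi_1(X)\overline{\Phi_2(X_\theta)}\rangle\,d\theta$, convert the $\theta$-derivative via Gaussian integration by parts (the same identity as (\ref{wkj})) into $\sum_{j,k}C_{jk}\langle(\nabla\Phi_1)_j(X)\overline{(\nabla\Phi_2)_k(X_\theta)}\rangle$, and finish with the Cauchy--Schwarz inequality in the Hilbert space $L^2\otimes\mathbb{C}^p$ weighted by $C\geq 0$; an equivalent route expands $\Phi_{1,2}$ in Hermite polynomials, i.e.\ in eigenfunctions of the Ornstein--Uhlenbeck generator, and compares eigenvalues. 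None of these ingredients appears in your text. As a separate remark, the argument you did write is essentially the paper's proof of Theorem \ref{t:sa} (Poincar\'e--Nash plus the Duhamel derivative formula (\ref{DF}), Cauchy--Schwarz, and unitarity), recast in Heisenberg-picture/commutator language; it is a reasonable sketch of that theorem, but that is not the statement you were asked to prove.
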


\noindent For the proof of the inequality see e.g. \cite{Bo:99},
Theorem 1.6.4.

\begin{proposition}
\label{p:Duh} (Duhamel formula). If $M_{1}$, $M_{2}$ are $n\times
n$-matrices, then
\begin{equation}
e^{(M_{1}+M_{2})t}=e^{M_{1}t}+%
\int_{0}^{t}e^{M_{1}(t-s)}M_{2}e^{(M_{1}+M_{2})s}ds.  \label{D}
\end{equation}
\end{proposition}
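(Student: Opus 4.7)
The plan is to prove the Duhamel formula by reducing it to the fundamental theorem of calculus applied to a well-chosen interpolating matrix-valued function. First I would introduce the auxiliary function
\[
f(s) := e^{M_{1}(t-s)}\, e^{(M_{1}+M_{2})s}, \qquad s\in[0,t],
\]
and observe that its values at the two endpoints recover exactly the terms appearing in the identity to be proved: $f(0)=e^{M_{1}t}$ and $f(t)=e^{(M_{1}+M_{2})t}$. Consequently, once I have an explicit formula for $f'(s)$, the elementary identity $f(t)-f(0)=\int_{0}^{t}f'(s)\,ds$ will yield (\ref{D}) at once.

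The main calculation would be to differentiate $f$ by the product rule, using $\frac{d}{ds}e^{M_{1}(t-s)}=-M_{1}e^{M_{1}(t-s)}$ and $\frac{d}{ds}e^{(M_{1}+M_{2})s}=(M_{1}+M_{2})e^{(M_{1}+M_{2})s}$, to obtain
\[
f'(s) = -M_{1}\, e^{M_{1}(t-s)}\, e^{(M_{1}+M_{2})s} + e^{M_{1}(t-s)}\, (M_{1}+M_{2})\, e^{(M_{1}+M_{2})s}.
\]
The one substantive step is to invoke the fact that a matrix commutes with its own exponential, a termwise consequence of the defining power series, so $M_{1}\, e^{M_{1}(t-s)}=e^{M_{1}(t-s)}\, M_{1}$. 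The two $M_{1}$-contributions then cancel and one is left with
\[
f'(s) = e^{M_{1}(t-s)}\, M_{2}\, e^{(M_{1}+M_{2})s}.
\]
Integrating from $0$ to $t$ and substituting the boundary values of $f$ then gives (\ref{D}).

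No serious obstacle arises: in finite dimension the exponentials are entire matrix-valued functions given by convergent power series, so entrywise differentiation and the interchange with $\int_{0}^{t}ds$ are fully justified. The one conceptual point worth flagging is that $M_{1}$ and $M_{2}$ (equivalently $M_{1}$ and $M_{1}+M_{2}$) need not commute, so one cannot split $e^{(M_{1}+M_{2})t}$ as $e^{M_{1}t}e^{M_{2}t}$; the integral remainder in (\ref{D}) is precisely the correction encoding this noncommutativity, and it is this remainder that will be iterated in the proof of Theorem \ref{t:sa} to express $\partial U^{(n)}/\partial w_{jk}$ in a form suitable for the Poincar\'e--Nash bound of Proposition \ref{p:PN}.
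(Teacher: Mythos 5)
Your proof is correct and is the standard elementary argument: differentiate the interpolant $f(s)=e^{M_{1}(t-s)}e^{(M_{1}+M_{2})s}$ and apply the fundamental theorem of calculus. The paper omits the proof entirely (stating only that it is elementary), and your write-up supplies exactly the argument intended.
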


\noindent The proof is elementary. Notice, that Duhamel formula
allows us to obtain
the derivative of the matrix $U^{(n)}(t)$ with respect to the entry $%
w_{lm},\;l,m=1,...,n$ of the matrix $w_n$ in (\ref{Ham}):
\begin{equation}
\frac{\partial U_{\alpha j,\beta k}^{(n)}(t)}{\partial w_{lm}}=\frac{iv}{%
\sqrt{n}}\int_{0}^{t}\sum_{\kappa =\pm }U_{\alpha j,\kappa
l}^{(n)}(t-s)U_{-\kappa m,\beta k}^{(n)}(s)ds.  \label{DF}
\end{equation}

\begin{proof} (of the Theorem \ref{t:sa}). By using the
Poincare-Nash inequality (\ref{PN}) with%
\begin{equation*}
\Phi _{1}(X)=\Phi _{2}(X)=\sum_{j=1}^{n}U_{\gamma k,\delta
j}^{(n)}(t)U_{\alpha j,\beta
k}^{(n)}(-t),\;X=\{w_{lm}\}_{l,m=1}^{n},\;C=\mathbf{1}_{n^{2}},
\end{equation*}
differentiation formula (\ref{DF}), and then Schwartz inequality,
we obtain

\begin{align}
\mathbf{Var}\biggl \{\sum_{j}&U_{\gamma k,\delta j}^{(n)}(t)
U_{\alpha
j,\beta k}^{(n)}(-t)\biggr \}  \label{v} \\
& \leq \frac{v^{2}}{n}\biggl \langle\sum_{l,m}\biggl |\sum_{j}\int_{0}^{t}%
\sum_{\kappa }U_{\gamma k,\kappa l}^{(n)}(t-s)U_{-\kappa m,\delta
j}^{(n)}(s)ds\;U_{\alpha j,\beta k}^{(n)}(-t)  \notag \\
& +\sum_{j}U_{\gamma k,\delta j}^{(n)}(t)\int_{0}^{t}\sum_{\kappa
}U_{\alpha
j,\kappa l}^{(n)}(-t+s)U_{-\kappa m,\beta k}^{(n)}(-s)ds\biggr |^{2}%
\biggl\rangle  \notag \\
& \leq \frac{2v^{2}t}{n}\biggl \langle\int_{0}^{t}\sum_{\kappa ,l}\bigl |%
U_{\gamma k,\kappa l}^{(n)}(t-s)\bigr |^{2}\sum_{\kappa ,m}\biggl |%
\sum_{j}U_{-\kappa m,\delta j}^{(n)}(s)U_{\alpha j,\beta k}^{(n)}(-t)\biggr |%
^{2}ds  \notag \\
& +\int_{0}^{t}\sum_{\kappa ,l}\biggl |\sum_{j}U_{\alpha j,\kappa
l}^{(n)}(-t+s)U_{\gamma k,\delta j}^{(n)}(t)\biggr |^{2}\sum_{\kappa ,m}%
\bigl |U_{-\kappa m,\beta k}^{(n)}(-s)\bigr |^{2}ds\biggl\rangle,
\notag
\end{align}%
and here and below all the sums over the Latin indices will be from $1$ to $%
n $, and the sum over the Greek indices will be over $\pm $.
Notice that
\begin{equation}
\sum_{\kappa ,m}\bigl |U_{-\kappa m,\beta k}^{(n)}(-s)\bigr |%
^{2}=\sum_{\kappa ,l}\bigl |U_{\gamma k,\kappa l}^{(n)}(t-s)\bigr
|^{2}=1, \label{U}
\end{equation}%
and
\begin{align}
\sum_{\kappa ,l}\biggl |\sum_{j}U_{\alpha j,\kappa
l}^{(n)}(-t+s)U_{\gamma k,\delta j}^{(n)}&(t)\biggr
|^{2}=\sum_{j_{1},j_{2}}\biggl (\sum_{\kappa ,l}U_{\alpha
j_{1},\kappa l}^{(n)}(-t+s)U_{\kappa l,\alpha
j_{2}}^{(n)}(t-s)\biggr )  \label{UU}
\\
&\times U_{\gamma k,\delta
j_{1}}^{(n)}(t)U_{\delta j_{2},\gamma k}^{(n)}(-t)=\sum_{j_{1}}\bigl |%
U_{\gamma k,\delta j_{1}}^{(n)}(t)\bigr |^{2}\leq 1.  \notag
\end{align}%
Hence, we have by (\ref{v}), (\ref{U}), and (\ref{UU}):%
\begin{equation*}
\mathbf{Var}\biggl \{\sum_{j}U_{\gamma k,\delta
j}^{(n)}(t)U_{\alpha j,\beta k}^{(n)}(-t)\biggr \}\leq
\frac{4v^{2}t^{2}}{n},\quad \alpha ,\delta =\pm .
\end{equation*}%
Now, taking into account (\ref{rdmo}) and the fact that $\rho (0)$ is a $%
2\times 2$ positive definite matrix and of unit trace, we obtain (\ref{var}%
).
\end{proof}

\section{Equilibrium Properties}

\label{s:equi}

We begin by considering the equilibrium (time independent)
microcanonical density matrix of the composite system
${}\mathcal{S}_{2,n}$:
\begin{equation}
\Omega (\lambda )=\delta (\lambda -H^{(n)})\left/
\mathrm{Tr}\;\delta (\lambda -H^{(n)})\right. .  \label{med}
\end{equation}%
Following a standard prescription of statistical mechanics, we
will replace the Dirac delta-function in (\ref{med}) by the
function $(2\varepsilon )^{-1}\chi _{\varepsilon }$, where $\chi
_{\varepsilon }$ is the indicator of the interval $(-\varepsilon
,\varepsilon )$, and $\varepsilon \ll \lambda $. Then the reduced
microcanonical density matrix, i.e. the microcanonical density
matrix of $\mathcal{S}_{2,n}$, traced with respect to the states
of ${}\mathcal{S}_{n}$, is the $2\times 2$ matrix of the form
\begin{equation}
\omega ^{(n)}(\lambda )=\frac{\overline{\nu }^{(n)}(\lambda
)}{\sum_{\delta =\pm }\overline{\nu }_{\delta ,\delta
}^{(n)}(\lambda )},  \label{mer}
\end{equation}%
where
\begin{equation}
\overline{\nu }_{\alpha \gamma }^{(n)}(\lambda )=(2\varepsilon
n)^{-1}\sum_{j=1}^{n}\chi _{\varepsilon }(\lambda
-H^{(n)})_{\alpha j,\gamma j}.  \label{mer1}
\end{equation}%
The corresponding canonical distribution of the composite system
is
\begin{equation}
e^{-\beta H_{n}}\left/ \mathrm{Tr}\;e^{-\beta H_{n}}\right. ,
\label{ced}
\end{equation}%
and the reduced distribution of the small system is
\begin{equation}
\frac{\int_{-\infty }^{\infty }e^{-\beta \lambda }\nu ^{(n)}(d\lambda )}{%
\sum_{\delta =\pm }\int_{-\infty }^{\infty }e^{-\beta \lambda
}\nu _{\delta ,\delta }^{(n)}(d\lambda )},  \label{cer}
\end{equation}%
where (cf (\ref{mer1}))
\begin{equation}
\nu ^{(n)}(\Delta )=\{\nu _{\alpha \gamma }^{(n)}(\Delta
)\}_{\alpha ,\gamma =\pm },\quad \nu _{\alpha \gamma
}^{(n)}(\Delta )=n^{-1}\sum_{j=1}^{n}\chi _{\Delta
}(H^{(n)})_{\alpha j,\gamma j},  \label{nuen}
\end{equation}%
and $\chi _{\Delta }$ is the indicator of an interval $\Delta $
of the spectral axis.

\begin{theorem}
\label{t:mom1} Consider the $2\times 2$ matrix measure $\nu ^{(n)}$ of (\ref%
{nuen}). Then

(i) there exists non-random diagonal $2\times 2$ matrix measure%
\begin{equation*}
\nu =\{\nu _{\alpha }\delta _{\alpha \gamma }\}_{\alpha ,\gamma
=\pm }
\end{equation*}%
such that the weak convergence:%
\begin{equation}
\lim_{n\rightarrow \infty }\nu ^{(n)}=\nu  \label{nunu}
\end{equation}%
holds with probability 1;

(ii) if
\begin{equation*}
f_{\alpha }(z)=\int_{-\infty }^{\infty }\frac{\nu _{\alpha }(d\lambda )}{%
\lambda -z},\quad\Im z\neq 0,\label{ST}
\end{equation*}%
is the Stieltjes transform of $\nu _{\alpha }$, and $\nu _{0}$ is
defined by (\ref{nu}), then\ the pair $f_{\alpha }(z),\;\alpha
=\pm $ is a unique solution of the system of two coupled
functional equations
\begin{equation}
f_{\alpha }(z)=\int_{-\infty }^{\infty }\frac{\nu
_{0}(dE)}{E+s\alpha -z-v^{2}f_{-\alpha }(z)},\;\alpha =\pm
\label{falf}
\end{equation}%
in the class of functions analytic for $\Im z\neq 0$, and
satisfying the condition $\Im f_{\alpha }(z)\cdot \Im z>0,\;\Im
z\neq 0$;

(iii) nonnegative measures $\nu _{\alpha },\;\alpha =\pm $ \ $\
$have the unit total mass, $\nu _{\alpha }(\mathbb{R})=1$, and if
the measure $\nu_0$ of (\ref{nu}) is absolute continuous and
$\underset{\lambda\in\mathbb{R}}{\sup}\;\nu_0^{\prime
}(\lambda)<\infty$, then $\nu _{\alpha }$, $\alpha =\pm$ are also
absolute continuous, and we have
\begin{equation}
\nu _{\alpha }^{\prime }(\lambda )\leq
\underset{\mu\in\mathbb{R}}{\sup}\;\nu_0^{\prime }(\mu);
\label{nupr}
\end{equation}

(iv) for any $\lambda \in \mathbb{R}$ with probability 1 there
exists the limit of
the reduced microcanonical distribution%
\begin{equation}
\lim_{n\rightarrow \infty }\omega ^{(n)}=\omega ,\label{merl}
\end{equation}%
where%
\begin{equation}
\omega (\lambda )=\frac{\overline{\nu }(\lambda )}{\sum_{\delta =\pm }%
\overline{\nu }_{\delta ,\delta }(\lambda )},\label{merl1}
\end{equation}%
and%
\begin{equation}
\overline{\nu }_{\alpha \gamma }=\delta _{\alpha ,\gamma }\overline{\nu }%
_{\alpha },\quad \overline{\nu }_{\alpha }(\lambda )=(2\varepsilon
)^{-1}\int_{\lambda -\varepsilon }^{\lambda +\varepsilon }\nu
_{\alpha }(d\mu ),  \label{nuab}
\end{equation}%
analogous formulas are also valid for the limits of the reduced
canonical distribution (\ref{cer}).
\end{theorem}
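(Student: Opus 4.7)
The plan is to reduce everything to the matrix-valued Stieltjes transform
\begin{equation*}
f^{(n)}_{\alpha\gamma}(z)=\int_{-\infty}^{\infty}\frac{\nu^{(n)}_{\alpha\gamma}(d\lambda)}{\lambda-z}=\frac{1}{n}\sum_{j=1}^{n}G_{\alpha j,\gamma j}(z),\quad G(z)=(H^{(n)}-z)^{-1},\quad\Im z\neq 0,
\end{equation*}
which determines $\nu^{(n)}$ uniquely. For part (i) I would first prove self-averaging by the same scheme as in Theorem \ref{t:sa}: the derivative formula $\partial G_{\alpha j,\gamma l}/\partial w_{pq}=-(v/\sqrt{n})\sum_{\kappa}G_{\alpha j,\kappa p}G_{-\kappa q,\gamma l}$ (from (\ref{D}) via direct differentiation of $G$), plugged into the Poincaré-Nash inequality with the operator bound $\|G(z)\|\le|\Im z|^{-1}$, gives $\mathbf{Var}\{f^{(n)}_{\alpha\gamma}(z)\}=O(n^{-2}|\Im z|^{-4})$. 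Summability in $n$ and Borel-Cantelli applied to a countable dense subset of $\mathbb{C}\setminus\mathbb{R}$, combined with equicontinuity on compacts of the uniformly bounded Nevanlinna family $\{f^{(n)}_{\alpha\gamma}\}$, yields almost sure convergence of $f^{(n)}_{\alpha\gamma}(z)$ to its mean for every $z$, and thence a.s.\ weak convergence of $\nu^{(n)}_{\alpha\gamma}$ by the Stieltjes continuity theorem.

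The limit equation (\ref{falf}) comes from a closed equation for $\langle f^{(n)}_{\alpha\gamma}(z)\rangle$. The componentwise identity $(H^{(n)}-z)G(z)=I$ reads
\begin{equation*}
(E_j^{(n)}+\alpha s-z)\,G_{\alpha j,\gamma l}(z)+\frac{v}{\sqrt{n}}\sum_{k}(w_n)_{jk}\,G_{-\alpha k,\gamma l}(z)=\delta_{\alpha\gamma}\delta_{jl}.
\end{equation*}
Applying Gaussian integration by parts to $\langle (w_n)_{jk}G_{-\alpha k,\gamma l}\rangle$ together with the derivative formula produces moments of products of entries of $G$; the decoupling $\langle AB\rangle=\langle A\rangle\langle B\rangle+o(1)$ supplied by Step 1 yields, for $\gamma=\alpha$, $l=j$,
\begin{equation*}
\bigl(E_j^{(n)}+\alpha s-z-v^2\langle f^{(n)}_{-\alpha,-\alpha}(z)\rangle\bigr)\langle G_{\alpha j,\alpha j}(z)\rangle=1+o(1).
\end{equation*}
Summing over $j$ and using (\ref{nu}) gives (\ref{falf}) in the limit. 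The analogous manipulation with $\gamma=-\alpha$ produces a homogeneous equation for the off-diagonal transform whose only Nevanlinna solution is $0$, so the limit matrix measure is diagonal. Uniqueness for (\ref{falf}): subtracting two Nevanlinna solutions yields
\begin{equation*}
f^{1}_\alpha-f^{2}_\alpha=v^2(f^{1}_{-\alpha}-f^{2}_{-\alpha})\int\frac{\nu_0(dE)}{(E+\alpha s-z-v^2f^{1}_{-\alpha})(E+\alpha s-z-v^2f^{2}_{-\alpha})},
\end{equation*}
and since $\Im(z+v^2f^{i}_{-\alpha})$ has the same sign as $\Im z$ and magnitude at least $|\Im z|$, the integral factor is $O(|\Im z|^{-2})$ and the map is a strict contraction for $|\Im z|$ large; global uniqueness then follows by analytic continuation.

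For (iii), $\nu_\alpha(\mathbb{R})=1$ drops out of the $z\to i\infty$ expansion of (\ref{falf}): $v^2f_{-\alpha}(z)=O(1/z)$, so $f_\alpha(z)\sim -z^{-1}\int\nu_0(dE)=-z^{-1}$. The density bound (\ref{nupr}) comes from Stieltjes inversion: with $\eta:=v^2\Im f_{-\alpha}(\lambda+i0)\ge 0$ and $c_\alpha:=\lambda+\Re(v^2f_{-\alpha}(\lambda+i0))-\alpha s$, equation (\ref{falf}) gives
\begin{equation*}
\pi\nu'_\alpha(\lambda)=\int\frac{\eta\,\nu_0(dE)}{(E-c_\alpha)^2+\eta^2}=\int\nu'_0(E)\,P_\eta(E-c_\alpha)\,dE\le\pi\sup_\mu\nu'_0(\mu),
\end{equation*}
because the Poisson kernel $P_\eta$ has total mass $\pi$. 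Part (iv) is then a direct consequence of (i) and (iii): the averaged quantities $\overline{\nu}^{(n)}_{\alpha\gamma}(\lambda)$ in (\ref{mer1}) converge almost surely to $\overline{\nu}_{\alpha\gamma}(\lambda)$ at every $\lambda$, and continuity of the ratio in (\ref{mer}) wherever its denominator is positive delivers (\ref{merl})-(\ref{merl1}); the canonical analogue is identical with $e^{-\beta\lambda}$ as test function against $\nu^{(n)}$.

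The main obstacle is the $o(1)$ control of the remainders that appear after Gaussian integration by parts in the derivation of the resolvent equation in Step 2: each remainder is a covariance of matrix elements of $G$, or of $n^{-1}$-sums thereof, and extracting the cancellations that render these $o(1)$ uniformly after summation in $j$ and $k$ requires the Poincaré-Nash variance bound from Step 1 together with careful bookkeeping of the tensor structure coming from the $\sigma^x$ coupling. The index flip $\alpha\mapsto-\alpha$ produced by $\sigma^x$ is precisely what couples $f_+$ and $f_-$ through the reciprocal denominators in (\ref{falf}); any simplification that loses this index structure destroys the equation.
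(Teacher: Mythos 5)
Your proposal is correct and reaches the same self-consistent equation, but by a different route at the key step. The paper does \emph{not} derive the equation for $f^{(n)}(z)$ directly from the resolvent identity; it first establishes (by unitary invariance of the GUE) that $\langle U^{(n)}(t)\rangle$ is diagonal in the Latin indices, then applies the Duhamel formula twice to $e^{itH^{(n)}}$ together with the integration-by-parts identity (\ref{wkj}) to get an integral equation in $t$ with remainder $r^{(n)}_{\alpha\beta,j}(s)=O(s^{3}n^{-3/2})$, and only then passes to the generalized Fourier transform to obtain $(E_j^{(n)}+s\sigma^z-z-v^2\sigma^x f^{(n)}(z)\sigma^x)G_j^{(n)}(z)=1+R_j^{(n)}(z)$. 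Your direct manipulation of $(H^{(n)}-z)G(z)=I$ plus Gaussian integration by parts and the Poincar\'e--Nash decoupling is the standard resolvent-only argument and is shorter for the purposes of this theorem alone; the paper's detour through the time domain buys the objects $U_j^{(n)}(t)$, $u^{(n)}(t)$ and the estimate (\ref{ro<}), which are reused verbatim in Theorems \ref{t:Uu} and \ref{t:rdm}. Everything else --- the variance bound $O(n^{-2}|\Im z|^{-4})$, Borel--Cantelli on a countable set with an accumulation point plus normal families, the contraction argument for uniqueness with analytic continuation, the $z\to i\infty$ asymptotics for unit mass, the Poisson-kernel bound for (\ref{nupr}), and deducing (iv) from continuity of $\nu_\alpha$ --- matches the paper. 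One small imprecision: the off-diagonal entries of the limiting matrix $f$ do not satisfy a self-contained homogeneous equation with a ``Nevanlinna'' structure of their own (the matrix $E+s\sigma^z-z-v^2\sigma^x f\sigma^x$ is not diagonal when $f$ is not), so the clean way to get diagonality is the paper's: the full matrix equation (\ref{feq}) is uniquely solvable in the class $\Im z\,\Im(f(z)\xi,\xi)\ge 0$ and admits a diagonal solution built from (\ref{falf}), hence its solution is diagonal. This is a one-line repair, not a gap.
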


\begin{remark}
\label{r:SF} The limiting measures $\nu _{\alpha },\;\alpha =\pm
$ can be found from their Stieltjes transforms $f_{\alpha
},\;\alpha =\pm $ via the
inversion formula \cite{Ak-Gl:93}:%
\begin{equation}
\nu _{\alpha }(\triangle )=\pi ^{-1}\lim_{\tau \rightarrow
0}\int_{\triangle }\Im f_{\alpha }(\lambda +i\tau )d\lambda .
\label{nal}
\end{equation}
\end{remark}

\noindent To prove the theorem we need the following auxiliary
fact.
\begin{proposition}
\label{p:diff} Let $\Phi $ be a $\mathbf{C}^{1}$ function of
$n\times n$ hermitian matrix, bounded together with its
derivatives. Then we have for the GUE matrix $w_n$ of (\ref{GUE}):
\begin{equation}
\bigl\langle\frac{\partial \Phi (w_n)}{\partial w_{jk}}\bigr\rangle=%
\bigl\langle\Phi (w_n)w_{kj}\bigr\rangle.  \label{wkj}
\end{equation}
\end{proposition}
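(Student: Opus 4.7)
The plan is to recognize (\ref{wkj}) as Gaussian integration by parts (Stein's lemma) adapted to the GUE, and to prove it by reducing to the one-dimensional real Gaussian identity $\langle \xi \Phi\rangle = \sigma^{2}\langle\partial_{\xi}\Phi\rangle$ applied to each of the independent real parameters of $w_{n}$. Because $w_{n}$ is Hermitian, $w_{kj}=\overline{w_{jk}}$ and
\begin{equation*}
\mathrm{Tr}\, w_{n}^{2}=\sum_{j}w_{jj}^{2}+2\sum_{j<k}|w_{jk}|^{2},
\end{equation*}
so the density (\ref{GUE}) factorizes into a standard real Gaussian of variance $1$ for each diagonal entry $w_{jj}$ and, for each $j<k$, independent real Gaussians of variance $1/2$ for $x_{jk}:=\Re w_{jk}$ and $y_{jk}:=\Im w_{jk}$, in agreement with (\ref{G}).

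Next I would fix the convention for $\partial/\partial w_{jk}$ when $j\neq k$. The natural (Wirtinger) choice is
\begin{equation*}
\frac{\partial}{\partial w_{jk}}=\frac{1}{2}\biggl(\frac{\partial}{\partial x_{jk}}-i\frac{\partial}{\partial y_{jk}}\biggr),\quad \frac{\partial}{\partial w_{kj}}=\frac{1}{2}\biggl(\frac{\partial}{\partial x_{jk}}+i\frac{\partial}{\partial y_{jk}}\biggr),
\end{equation*}
for $j<k$, which gives $\partial w_{lm}/\partial w_{jk}=\delta_{jl}\delta_{km}$ consistently with the differentiation formula (\ref{DF}). Applying the one-dimensional Gaussian IBP to each real variable then yields $\langle w_{jj}\Phi\rangle=\langle\partial_{w_{jj}}\Phi\rangle$ on the diagonal and, for $j<k$,
\begin{equation*}
\langle w_{kj}\Phi\rangle=\langle(x_{jk}-iy_{jk})\Phi\rangle=\frac{1}{2}\langle(\partial_{x_{jk}}-i\partial_{y_{jk}})\Phi\rangle=\langle\partial_{w_{jk}}\Phi\rangle,
\end{equation*}
together with the conjugate identity $\langle w_{jk}\Phi\rangle=\langle\partial_{w_{kj}}\Phi\rangle$. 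In every case this recovers (\ref{wkj}); the boundary terms in the one-dimensional IBP vanish because $\Phi$ and its first derivatives are bounded while the Gaussian density decays super-polynomially.

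The only real subtlety is the bookkeeping associated with the Hermitian constraint: off-diagonal entries come in complex-conjugate pairs and one must fix a convention for $\partial/\partial w_{jk}$ that is compatible with the way (\ref{DF}) is used elsewhere in the paper. Once the Wirtinger convention above is adopted, the argument is a routine application of one-dimensional Gaussian integration by parts, and no analytic difficulty arises under the stated hypotheses that $\Phi$ is $\mathbf{C}^{1}$ with bounded derivatives.
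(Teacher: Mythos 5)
Your proposal is correct and follows exactly the route the paper intends: the paper's proof is the one-line remark that (\ref{wkj}) ``follows from (\ref{GUE})--(\ref{G}) and the integration by parts formula,'' and your argument simply fills in the details of that Gaussian integration by parts, with the correct variance bookkeeping from (\ref{G}) and a Wirtinger convention consistent with (\ref{DF}). No discrepancy with the paper's approach.
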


\noindent The proof of proposition follows from
(\ref{GUE})--(\ref{G}) and the integration by parts formula.

\medskip

\begin{proof} (of the Theorem \ref{t:mom1}). Denote
\begin{equation}
G^{(n)}(z)=(H^{(n)}-z)^{-1},\;\Im z\neq 0  \label{resHn}
\end{equation}%
the resolvent of (\ref{Ham}) and set
\begin{equation}
g_{\alpha \gamma }^{(n)}(z)=n^{-1}\sum_{j=1}^{n}G_{\alpha j,\gamma
j}^{(n)}(z).  \label{gn}
\end{equation}%
It follows from the spectral theorem for Hermitian matrices that
$g_{\alpha \gamma }^{(n)}$ is the Stieltjes transform of $\nu
_{\alpha \gamma }^{(n)}$ and in view of the one-to-one
correspondence between measures and their Stieltjes transforms
(see \cite{Ak-Gl:93}, Section 59) to prove the weak convergence
(\ref{nunu}) with probability 1 it suffices to prove that with
probability 1 $g_{\alpha \gamma }^{(n)}$ converges to $\delta
_{\alpha \gamma }f_{\alpha }$ uniformly on a compact set of
$\mathbb{C}\backslash
\mathbb{R}$. Denote%
\begin{equation}
f_{\alpha \gamma }^{(n)}(z):=\bigl\langle g_{\alpha \gamma }^{(n)}(z)%
\bigr\rangle=n^{-1}\sum_{j=1}^{n}\bigl\langle G_{\alpha j,\gamma j}^{(n)}(z)%
\bigr\rangle.  \label{fag}
\end{equation}%
%
%
For further purposes it is convenient to start by considering the
functions
\begin{equation}
u_{\alpha \gamma }^{(n)}(t)=n^{-1}\sum_{j=1}^{n}\bigl\langle
U_{\alpha j,\gamma j}^{(n)}(t)\bigr\rangle,  \label{unUn}
\end{equation}%
where the matrix $U^{(n)}(t)$ is defined in (\ref{mun}). By the
spectral theorem for Hermitian matrices $u_{\alpha \gamma }^{(n)}
$ is the Fourier transfrom of $\nu _{\alpha \gamma }^{(n)}$ and
$f_{\alpha \gamma }^{(n)}(z)$ is the generalized Fourier
transform (see e.g. \cite{Ti:86}) of $u_{\alpha \gamma
}^{(n)}(t)$:
\begin{equation}
f_{\alpha \gamma }^{(n)}(z)=i^{-1}\int_{0}^{\infty
}e^{-izt}u_{\alpha \gamma }^{(n)}(t)dt,\quad \Im z<0.  \label{Fur}
\end{equation}%
Notice that the matrix $\bigl\langle U^{(n)}(t)\bigr\rangle$ is
diagonal with respect to the Latin indices. Indeed, since $w_n$
in (\ref{Ham}) is the GUE random matrix whose probability \ law
(\ref{GUE}) is unitary invariant, we have for any unitary
$n\times n$-matrix $U$:
\begin{equation}
\bigl\langle\exp \{itH^{(n)}\}\bigr\rangle=\bigl\langle\exp \{it\bigl (%
H_{0}^{(n)}+vn^{-1/2}\sigma ^{x}\otimes Uw^{(n)}U^{\ast }\bigr )\}%
\bigr\rangle.  \notag
\end{equation}%
In particularly, for any diagonal unitary matrix $U=\{e^{i\varphi
_{j}}\delta _{jk}\}_{j,k=1}^{n}$ with distinct $\varphi
_{j}\in \lbrack 0,2\pi ),\;j=1,...,n$ we obtain%
\begin{equation}
\bigl\langle(\exp \{itH^{(n)}\})_{\alpha j,\beta
k}\bigr\rangle=e^{i(\varphi
_{j}-\varphi _{k})}\bigl\langle(\exp \{itH^{(n)}\})_{\alpha j,\beta k}%
\bigr\rangle.  \notag
\end{equation}%
This implies
\begin{equation}
\bigl\langle U_{\alpha j,\beta k}^{(n)}(t)\bigr\rangle=U_{\alpha
\beta ,j}^{(n)}(t)\delta _{jk},\quad U_{\alpha \beta
,j}^{(n)}(t)=\bigl\langle\exp \{itH^{(n)}\}_{\alpha j,\beta j
}\bigr\rangle.  \label{selr}
\end{equation}%
Hence we can write (\ref{unUn}) as
\begin{equation}
u_{\alpha \gamma }^{(n)}(t)=\frac{1}{n}\sum_{j=1}^{n}U_{\alpha
\gamma ,j}^{(n)}(t).  \label{ualfgam}
\end{equation}%
It follows now from (\ref{D}), (\ref{DF}), (\ref{Hn}), and
(\ref{wkj}) that
\begin{align}
\bigl\langle e^{itH^{(n)}}\bigr\rangle_{\alpha j,\beta j} &=\bigl (%
e^{itH_{0}^{(n)}}\bigr )_{\alpha j,\beta j}+i\int_{0}^{t}ds\;\bigl\langle %
e^{i(t-s)H_{0}^{(n)}}M^{(n)}e^{itH^{(n)}}\bigr\rangle_{\alpha
j,\beta j}\label{DD}
\\
& =e^{it\lambda _{\alpha j}^{(n)}}\delta _{\alpha \beta }+\frac{iv}{\sqrt{n}}%
\int_{0}^{t}ds\;e^{i(t-s)\lambda _{\alpha j}^{(n)}}\sum_{m}\bigl\langle %
w_{jm}U_{-\alpha m,\beta j}^{(n)}(s)\bigr\rangle\notag
\\
=e^{it\lambda _{\alpha j}^{(n)}}\delta _{\alpha \beta }-\frac{v^{2}}{n}%
&\int_{0}^{t}ds\;e^{i(t-s)\lambda _{\alpha
j}^{(n)}}\int_{0}^{s}d\tau \sum_{m}\sum_{\nu }\bigl\langle
U_{-\alpha m,\nu m}^{(n)}(s-\tau )U_{-\nu j,\beta j}^{(n)}(\tau
)\bigr\rangle.  \notag
\end{align}%
Hence taking into account (\ref{selr}) and (\ref{ualfgam}), we
obtain:
\begin{align}
&U_{\alpha \beta ,j}^{(n)}(t)=\int_{0}^{t}ds\;e^{i(t-s)\lambda
_{\alpha j}^{(n)}}\;r_{\alpha \beta ,j}^{(n)}(s)\label{EqUn}
\\
& +e^{it\lambda _{\alpha j}^{(n)}}\delta _{\alpha \beta }
-v^{2}\int_{0}^{t}ds\;e^{i(t-s)\lambda _{\alpha
j}^{(n)}}\int_{0}^{s}d\tau \sum_{\nu }u_{-\alpha \nu
}^{(n)}(s-\tau )U_{-\nu \beta ,j}^{(n)}(\tau ),\notag
\end{align}%
where
\begin{equation}
r_{\alpha \beta ,j}^{(n)}(s)=-\frac{v^{2}}{n}\int_{0}^{s}d\tau
\sum_{m}\sum_{\nu}\bigl\langle\smash{\overset{\circ }{U}}_{-\alpha
m,\nu
m}^{(n)}(s-\tau )U_{-\nu j,\beta j}^{(n)}(\tau )\bigr\rangle,\quad (\overset{%
\circ }{U}=U-\langle U\rangle ).  \notag
\end{equation}%
By using Schwartz inequality and inequality (\ref{v1}) below we
have that
\begin{equation}
\bigl |r_{\alpha \beta ,j}^{(n)}(s)\bigr |\leq
\frac{Cs^{3}}{n^{3/2}}. \label{ro<}
\end{equation}%
Here and below we use the notation $C$ for all positive quantities
that do not depend on $n$, $t$, $z$ and indexes.

\noindent We will use the notations $G_{j}^{(n)}(z)$, $f^{(n)}(z)$
and $R_{j}^{(n)}(z)$
for the generalized Fourier transforms (see (\ref{Fur})) of the $2\times 2$%
-matrices%
\begin{equation}
U_{j}^{(n)}=\{U_{\alpha \beta ,j}^{(n)}(t)\}_{\alpha \beta =\pm
},\; u^{(n)}=\{u_{\alpha \beta }^{(n)}(t)\}_{\alpha \beta =\pm
},\; r_{j}^{(n)}(t)=\{r_{\alpha \beta ,j}^{(n)}(t)\}_{\alpha
\beta =\pm }. \label{Uur}
\end{equation}%
We have from the spectral theorem, (\ref{unUn}), (\ref{selr}),
and (\ref{resHn}) (cf (\ref{Fur}))
\begin{equation}
G_j^{(n)}(z)=i^{-1}\int_0^\infty
e^{-izt}U_j^{(n)}(t)dt=\bigl\{\langle G_{\alpha j,\beta
j(z)}\rangle \bigr\}_{\alpha,\beta=\pm},\quad \Im z<0. \label{Gnj}
\end{equation}
This, (\ref{fag}), and (\ref{EqUn}) lead to the matrix relation:
\begin{equation}
(E_{j}^{(n)}+s\sigma ^{z}-z-v^{2}\sigma ^{x}f^{(n)}(z)\sigma
^{x})G_{j}^{(n)}(z)=1+R_{j}^{(n)}(z).  \notag
\end{equation}%
Since the resolvent $G^{(n)}(z)$ possesses the property $\Im z\Im
(G^{(n)}(z)x,x)\geq 0$, $\forall x\in \mathbb{C}^{2n}$, the matrix $%
f^{(n)}(z)$ possesses the same property $\forall \xi \in
\mathbb{C}^{2}$, and
\begin{equation}
\Im ((E_{j}^{(n)}+s\sigma ^{z}-z-v^{2}\sigma ^{x}f^{(n)}(z)\sigma
^{x})\xi ,\xi )=-\Im z||\xi ||^{2}-v^{2}(f^{(n)}(z)\sigma ^{x}\xi
,\sigma ^{x}\xi )\geq -\Im z||\xi ||^{2}.  \notag
\end{equation}%
Thus the matrix $E_{j}^{(n)}+s\sigma ^{z}-z-v^{2}\sigma
^{x}f^{(n)}(z)\sigma ^{x}$ is invertible, its inverse
\begin{equation}
f^{(n)}(E_{j}^{(n)},z)=(E_{j}^{(n)}+s\sigma ^{z}-z-v^{2}\sigma
^{x}f^{(n)}(z)\sigma ^{x})^{-1}  \label{fnE}
\end{equation}%
admits the bound
\begin{equation}
||f^{(n)}(E_{j}^{(n)},z)||\leq |\Im z|^{-1},  \label{fnj<}
\end{equation}%
and equation for $G_{j}^{(n)}(z)$ takes the form
\begin{equation}
G_{j}^{(n)}(z)=f^{(n)}(E_{j}^{(n)},z)+f^{(n)}(E_{j}^{(n)},z)R_{j}^{(n)}(z).
\label{Gn}
\end{equation}%
Applying to the equation the operation $n^{-1}\sum_j$ we obtain
in view of (\ref{nun})
\begin{equation}
f^{(n)}(z)=\int_{-\infty }^{\infty }\nu
_{0}^{(n)}(dE)f^{(n)}(E,z)+n^{-1}%
\sum_{j=1}^{n}f^{(n)}(E_{j}^{(n)},z)R_{j}^{(n)}(z).  \label{fn}
\end{equation}%
Since the resolvent $G^{(n)}(z)$ is analytic if $\Im z\neq 0$ and
bounded from above by $|\Im z|^{-1}$, we have the bound
\begin{equation}
||f^{(n)}(z)||\leq |\Im z|^{-1},  \label{fn<}
\end{equation}%
implying that the sequence $\{f^{(n)}(z)\}_{n\geq 1}$ consists of
functions, analytic and uniformly bounded in $n$ and in $z$ by
$\eta _{0}^{-1}$ if $|\Im z|\geq \eta _{0}>0$. Hence there exists
analytic $2\times 2$ matrix function $f(z)$, $\Im z\neq 0$,
such that $||f(z)||\leq |\Im z|^{-1}$, and an infinite subsequence $%
\{f^{(n_{k})}(z)\}_{k\geq 1}$ that converges to $f(z)$ uniformly
on any
compact set of $\mathbb{C}\setminus \mathbb{R}$. This and estimates (\ref%
{ro<}), (\ref{fnj<}) allow us to pass to the limit
$n_{k}\rightarrow \infty $ in (\ref{fn}) and obtain that the
limit of any converging subsequence of the sequence
$\{f^{(n)}(z)\}_{n\geq 1}$ satisfies the matrix functional
equation
\begin{equation}
f(z)=\int_{-\infty }^{\infty }\nu _{0}(dE)f(E,z),  \label{feq}
\end{equation}%
where
\begin{equation}
f(E,z)=(E+s\sigma ^{z}-z-v^{2}\sigma ^{x}f(z)\sigma
^{x})^{-1},\quad ||f(E,z)||\leq |\Im z|^{-1}.  \label{fEz<}
\end{equation}%
The equation is uniquely soluble in the class of $2\times 2$ matrix functions, analytic for $%
\Im z\neq 0$, and such that $\Im z\Im (f(z)\xi ,\xi )\geq 0$,
$\forall \xi \in \mathbb{C}^{2}$. Indeed, for any two solutions
$f_{1}$, $f_{2}$ of the class, and $g=f_{1}-f_{2}$ we have
\begin{align*}
g(z)=\int_{-\infty }^{\infty }\nu _{0}(dE)\bigl[E+s\sigma
^{z}-z-&v^{2}\sigma ^{x}f_{1}(z)\sigma ^{x}\bigr]^{-1}v^{2}\sigma
^{x}g(z)\sigma ^{x}
\\
&\times\bigl[E+s\sigma ^{z}-z-v^{2}\sigma ^{x}f_{2}(z)\sigma
^{x}\bigr]^{-1}, \notag
\end{align*}%
and by (\ref{nu}), (\ref{fEz<}) we obtain inequality
$||g(z)||\leq v^{2}|\Im z|^{-2}||g(z)||$ from which it follows
that $g(z)=0$ for $v\Im z<1$, hence for any $\Im z \neq 0$ by
analyticity. The solution of (\ref {feq}) is diagonal, $f_{\alpha
\beta }=f_{\alpha }\delta _{\alpha
\beta }$, and pair $f_{\alpha }$, $\alpha =\pm $ satisfies system (\ref%
{falf}). This follows from the unique solvability of (\ref{falf}).
We can rewrite (\ref{falf}) in the form $f_{\alpha }(z)=f_{\alpha
}^{0}(z+v^{2}f_{-\alpha }(z))$, where $f_{\alpha }^{0}(z)$ is the
Stieltjes transform of the unit non-negative measure $\nu
_{0}^{\alpha }(E)=\nu
_{0}(E-\alpha s)$. Since $f_{\alpha }^{0}(z)$ possesses the property $%
\lim_{\eta \rightarrow \infty }\eta |f_{\alpha }^{0}(i\eta )|=1$
and $|\Im (z+v^{2}f_{-\alpha }(z))|\geq |\Im z|$, then
$\lim_{\eta \rightarrow \infty }\eta |f_{\alpha }(i\eta )|=1$ and
$f_{\alpha }(z)$, $\alpha =\pm $ are Stieltjes transforms of the
unit non-negative measures $\nu _{\alpha }(\lambda )$ (\ref{nal})
(see \cite{Ak-Gl:93}, Section 59).

In addition, the Tchebyshev inequality and bound (\ref{v2}) below imply that for any $%
\varepsilon >0$
\begin{equation}
\mathbf{P}\{|f_{\alpha \gamma }^{(n)}(z)-g_{\alpha \gamma
}^{(n)}(z)|>\varepsilon \}\leq \frac{1}{\varepsilon
^{2}}\mathbf{Var}\{g_{\alpha \gamma }^{(n)}(z)\}\leq
\frac{2v^{2}}{\varepsilon ^{2}n^{2}|\Im z|^{4}}.  \notag
\end{equation}%
Hence the series
\begin{equation*}
\sum_{n=1}^{\infty }\mathbf{P}\{|f_{\alpha \gamma
}^{(n)}(z)-g_{\alpha \gamma }^{(n)}(z)|>\varepsilon \}
\end{equation*}%
converges for any $\varepsilon >0$, $|\Im z|\geq \eta _{0}>0$,
and by the
Borel-Cantelli lemma we have for any fixed z, $|\Im z|\geq \eta _{0}>0$, $%
\lim_{n\rightarrow \infty }g_{\alpha \gamma }^{(n)}(z)=f_{\alpha
\gamma }(z)$ with probability 1. With the same probability this
limiting relation is valid for all points of an infinite
countable sequence $\{z_{j}\}_{j\geq 1},\;|\Im z_{j}|\geq \eta
_{0}>0$, possessing an accumulation point. Hence on any
compact of $\mathbb{C}\setminus \mathbb{R}$ with probability 1 $%
\lim_{n\rightarrow \infty }g_{\alpha \gamma }^{(n)}(z)=f_{\alpha \gamma }(z)$%
, and we have the weak convergence (\ref{nunu}) with the formulas
(\ref{ST})-(\ref{falf}).

Let us prove assertion (ii) of theorem. It follows from the
(\ref{falf})
\begin{align}
\Im f_\alpha(z)&\leq
\underset{\mu\in\mathbb{R}}{\sup}\;\nu_0^{\prime
}(\mu)\int_{-\infty}^\infty\frac{\Im(z+v^2f_{-\alpha}(z))dE}
{(E+s\alpha-\Re(z+v^2f_{-\alpha}(z)))^2+(\Im(z+v^2f_{-\alpha}(z)))^2}\notag
\\
&=\pi\underset{\mu\in\mathbb{R}}{\sup}\;\nu_0^{\prime
}(\mu).\notag
\end{align}
We have now by (\ref{nal})
\begin{align}
\nu_\alpha(\triangle)\leq
|\triangle|\;\underset{\mu\in\mathbb{R}}{\sup}\;\nu_0^{\prime
}(\mu).\notag
\end{align}
This implies (ii). To prove (iii) we notice that by (ii) measures
$\nu_\alpha$, $\alpha=\pm$ are continuous. Thus we can pass to
the limit $n\rightarrow\infty$ in (\ref{mer1}), written as
\begin{align}
\overline{\nu}^{(n)}_\alpha(\lambda)=\nu^{(n)}_\alpha([\lambda+\varepsilon,
\lambda-\varepsilon)). \notag
\end{align}
This and (\ref{mer}) imply (\ref{merl})-(\ref{merl1}).
\end{proof}

\begin{lemma}
\label{l:vars} Under the conditions of Theorem \ref{t:mom1}
\begin{align}
& \mathbf{Var}\{U_{\alpha j,\gamma k}^{(n)}(t)\}\leq
\frac{v^{2}t^{2}}{n},& &\mathbf{Var}\{u_{\alpha \gamma
}^{(n)}(t)\}\leq \frac{v^{2}t^{2}}{n^{2}},
\label{v1} \\
& \mathbf{Var}\{G_{\alpha j,\gamma k}^{(n)}(z)\}\leq
\frac{v^{2}}{n|\Im
z|^{4}},& &\mathbf{Var}\{g_{\alpha \gamma }^{(n)}(z)\}\leq \frac{2v^{2}}{%
n^{2}|\Im z|^{4}},\quad \Im z\neq 0.  \label{v2}
\end{align}
\end{lemma}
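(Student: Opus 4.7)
The plan is to apply the Poincar\'e--Nash inequality (Proposition \ref{p:PN}) to each of the four random quantities, viewing $X=\{w_{lm}\}_{l,m=1}^{n}$ as the Gaussian vector with $C=\mathbf{1}_{n^{2}}$, so that $\mathbf{Var}\{\Phi\}\leq\bigl\langle\sum_{l,m}|\partial\Phi/\partial w_{lm}|^{2}\bigr\rangle$. For the two $U$-estimates I insert the Duhamel derivative formula (\ref{DF}); for the two $G$-estimates I differentiate $G^{(n)}=(H^{(n)}-z)^{-1}$ directly, obtaining
\begin{equation*}
\frac{\partial G^{(n)}_{\alpha j,\gamma k}(z)}{\partial w_{lm}}=-\frac{v}{\sqrt{n}}\sum_{\beta}G^{(n)}_{\alpha j,\beta l}(z)\,G^{(n)}_{-\beta m,\gamma k}(z).
\end{equation*}

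The estimate for $\mathbf{Var}\{U^{(n)}_{\alpha j,\gamma k}(t)\}$ sets the template. After Poincar\'e--Nash, (\ref{DF}), and Cauchy--Schwartz on the $s$-integral (contributing a factor $t$), it suffices to bound
\begin{equation*}
\sum_{l,m}\Bigl|\sum_{\kappa}U^{(n)}_{\alpha j,\kappa l}(t-s)\,U^{(n)}_{-\kappa m,\gamma k}(s)\Bigr|^{2}=\sum_{\kappa_{1},\kappa_{2}}A_{\kappa_{1}\kappa_{2}}B_{\kappa_{1}\kappa_{2}},
\end{equation*}
where $A_{\kappa_{1}\kappa_{2}}=\sum_{l}U^{(n)}_{\alpha j,\kappa_{1}l}(t-s)\overline{U^{(n)}_{\alpha j,\kappa_{2}l}(t-s)}$ and $B_{\kappa_{1}\kappa_{2}}=\sum_{m}U^{(n)}_{-\kappa_{1}m,\gamma k}(s)\overline{U^{(n)}_{-\kappa_{2}m,\gamma k}(s)}$ are $2\times 2$ positive semidefinite Gram matrices. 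Unitarity of $U^{(n)}(\pm s)$ gives $\mathrm{Tr}\,A=\mathrm{Tr}\,B=1$, so $\|A\|_{HS},\|B\|_{HS}\leq 1$; an entrywise Cauchy--Schwartz then yields $\sum_{\kappa_{1},\kappa_{2}}A_{\kappa_{1}\kappa_{2}}B_{\kappa_{1}\kappa_{2}}\leq 1$, producing $v^{2}t^{2}/n$. The parallel estimate for $G^{(n)}_{\alpha j,\gamma k}(z)$ is structurally identical but has no time integral; unitarity is replaced by $\|G^{(n)}(z)\|\leq|\Im z|^{-1}$, so $\mathrm{Tr}\,A,\mathrm{Tr}\,B\leq|\Im z|^{-2}$, and the bound $v^{2}/(n|\Im z|^{4})$ follows.

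For the averaged quantities $u^{(n)}_{\alpha\gamma}(t)=n^{-1}\sum_{j}U^{(n)}_{\alpha j,\gamma j}(t)$ and $g^{(n)}_{\alpha\gamma}(z)=n^{-1}\sum_{j}G^{(n)}_{\alpha j,\gamma j}(z)$, the outer factor $n^{-1}$ squares to $n^{-2}$, producing the extra $1/n$ in the bound. Introducing $Q_{\gamma\alpha}:=|\gamma\rangle\langle\alpha|\otimes\mathbf{1}_{n}$, the key algebraic step is the identity
\begin{equation*}
\sum_{j}U^{(n)}_{\alpha j,\delta l}(t-s)\,U^{(n)}_{\beta m,\gamma j}(s)=\bigl(U^{(n)}(s)\,Q_{\gamma\alpha}\,U^{(n)}(t-s)\bigr)_{(\beta,m),(\delta,l)},
\end{equation*}
which identifies $\sum_{j,\kappa}U^{(n)}_{\alpha j,\kappa l}(t-s)U^{(n)}_{-\kappa m,\gamma j}(s)$ as the superposition of the two off-diagonal (in the $\pm$ index) blocks of $M:=U^{(n)}(s)Q_{\gamma\alpha}U^{(n)}(t-s)$. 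By unitary invariance of the Hilbert--Schmidt norm, $\|M\|_{HS}=\|Q_{\gamma\alpha}\|_{HS}=\sqrt{n}$, so the HS norm of the block sum is $O(\sqrt{n})$; combined with Cauchy--Schwartz on the time integral this yields $\mathbf{Var}\{u^{(n)}_{\alpha\gamma}(t)\}=O(v^{2}t^{2}/n^{2})$. The $G$-version is analogous, with $M$ replaced by $G^{(n)}(z)Q_{\gamma\alpha}G^{(n)}(z)$, whose HS norm is at most $\|G^{(n)}(z)\|^{2}\|Q_{\gamma\alpha}\|_{HS}\leq\sqrt{n}/|\Im z|^{2}$, producing $O(v^{2}/(n^{2}|\Im z|^{4}))$. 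The main bookkeeping challenge is tracking the two-index ($\pm$ vs.\ $1,\dots,n$) structure through the derivative expansion and recognizing the resulting inner sums as matrix elements of $U^{(n)}Q_{\gamma\alpha}U^{(n)}$ or $G^{(n)}Q_{\gamma\alpha}G^{(n)}$, so that unitary invariance (or the operator-norm bound on $G^{(n)}$) can be applied.
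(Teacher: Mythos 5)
Your proposal follows the same basic strategy as the paper: Poincar\'e--Nash inequality combined with the Duhamel-derivative formula (\ref{DF}) for the $U$-estimates, the resolvent-derivative formula for the $G$-estimates, and then unitarity (resp.\ $\|G^{(n)}(z)\|\leq|\Im z|^{-1}$) to close the bound. The paper's own proof is extremely terse --- for (\ref{v1}) it only says ``acting as in the case of Theorem \ref{t:sa},'' it writes out the unaveraged $G$-estimate (using Cauchy--Schwartz on the $\kappa$-sum, which is a minor variant of your Gram-matrix/Hilbert--Schmidt argument and gives the identical numerical bound), and it dispatches the averaged estimates with ``by a similar argument.'' Where you genuinely add content is in the averaged cases: the identity expressing the inner $j$-sum as a matrix entry of $U^{(n)}(s)Q_{\gamma\alpha}U^{(n)}(t-s)$ (resp.\ $G^{(n)}Q_{\gamma\alpha}G^{(n)}$), combined with unitary invariance of the Hilbert--Schmidt norm and $\|Q_{\gamma\alpha}\|_{HS}=\sqrt{n}$, is a clean and correct way to see where the extra factor $1/n$ comes from, and it reproduces the stated $2v^2/(n^2|\Im z|^4)$ exactly. (Carried through, the same computation gives $2v^2t^2/n^2$ rather than the paper's stated $v^2t^2/n^2$ for $u^{(n)}$; given that the companion $g$-bound in (\ref{v2}) carries a factor $2$, this looks like a dropped constant in the paper rather than a gap in your argument, and your use of $O(\cdot)$ notation correctly flags that the constant was not optimized.) One small remark: as in the paper, to make $\mathbf{Var}\{u^{(n)}_{\alpha\gamma}\}$ meaningful you must read $u^{(n)}_{\alpha\gamma}(t)$ as the random quantity $n^{-1}\sum_j U^{(n)}_{\alpha j,\gamma j}(t)$, not the expectation (\ref{unUn}); you do this implicitly and correctly.
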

\begin{proof}
Acting as in the case of Theorem \ref{t:sa} we obtain (\ref{v1}).
The differentiation formula for the resolvent
\begin{equation}
\frac{\partial G_{\alpha j,\beta k}^{(n)}(z)}{\partial w_{lm}^{(n)}}=\frac{iv%
}{\sqrt{n}}\sum_{\kappa}G_{\alpha j,\kappa l}^{(n)}(z)G_{-\kappa
m,\beta k}^{(n)}(z),  \notag
\end{equation}%
following from the resolvent identity, together with Poincare-Nash
inequality (\ref{PN}) imply the first inequality in (\ref{v2}):
\begin{align}
\mathbf{Var}\{G_{\alpha j,\gamma k}^{(n)}(z)\}&\leq \frac{v^{2}}{n}\bigl \langle%
\sum_{l,m}\bigl |\sum_{\kappa }G_{\alpha j,\kappa
l}^{(n)}(z)G_{-\kappa m,\beta k}^{(n)}(z)\bigr |^{2}\bigr
\rangle  \notag
\\
&\leq \frac{v^{2}}{n}\bigl \langle\sum_{l,\kappa }\bigl |G_{\alpha
j,\kappa
l}^{(n)}(z)\bigr |^{2}\sum_{m,\kappa }\bigl |G_{-\kappa m,\beta k}^{(n)}(z)%
\bigr |^{2}\bigr \rangle\leq \frac{v^{2}}{n|\Im z|^{4}},\quad \Im
z>0. \notag
\end{align}%
The second inequality in (\ref{v2}) can be proved by a similar
argument.
\end{proof}

\medskip

We will return now to functions of variable $t$ and find the $%
n\rightarrow\infty$ limits of the sequences
$\{U^{(n)}_j(t)\}_{n\geq 1}$ and
$\{u^{(n)}(t)=n^{-1}\sum_{j=1}^nU^{(n)}_j(t)\}_{n\geq 1}$.

\begin{theorem}
\label{t:Uu} Consider the $2\times 2$ matrices
$\{U_{j}^{(n)}(t)\}_{n\geq 1}$ and $\{u^{(n)}(t)\}_{n\geq 1}$,
defined in (\ref{Uur}) and (\ref{ualfgam}) and choose a
subsequence $\{E_{j_{n}}^{(n)}\}$ that converges to a given $E$ of
the support of $\nu _{0}$ of (\ref{nu}). Then there exist the
limits
\begin{equation}
U_{E}(t)=\lim_{n\rightarrow \infty }U_{j_{n}}^{(n)}(t),\quad
u(t)=\lim_{n\rightarrow \infty }u^{(n)}(t), \label{limuU}
\end{equation}%
where
\begin{equation}
U_{E}(t)=\frac{i}{2\pi }\int_{L}e^{izt}f(E,z)dz:=\frac{i}{2\pi }%
\lim_{N\rightarrow \infty }\int_{-N-i\eta }^{N-i\eta
}e^{izt}f(E,z)dz,\quad \forall \eta >0,  \label{UE}
\end{equation}%
\begin{equation}
||U_{E}(t)||=1\quad \forall t\geq 0,  \label{|UE|}
\end{equation}%
and
\begin{equation}
u(t)=\frac{i}{2\pi }\int_{L}dz\;e^{izt}\int_{-\infty }^{\infty
}\nu _{0}(dE)f(E,z)  \label{u}
\end{equation}
with $f(E,z)$ defined in (\ref{fEz<}):
\begin{equation}
f_{\alpha\beta}(E,z)=f_{\alpha}(E,z)\delta_{\alpha\beta}, \quad
f_{\alpha}(E,z)=(E_\alpha
-z-v^2f_{-\alpha}(z))^{-1},\;E_\alpha=E+\alpha s.
  \label{faEz}
\end{equation}
\end{theorem}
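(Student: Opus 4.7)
The plan is to work on the generalized Fourier (Stieltjes) side, where Theorem~\ref{t:mom1} has done the main convergence work, and then invert. The starting point is the identity
\[
G_j^{(n)}(z)=f^{(n)}(E_j^{(n)},z)+f^{(n)}(E_j^{(n)},z)\,R_j^{(n)}(z)
\]
from the proof of Theorem~\ref{t:mom1}, in which $R_j^{(n)}(z)$ is the generalized Fourier transform of the remainder $r_j^{(n)}$ from (\ref{EqUn}). The bound (\ref{ro<}) forces $\|R_j^{(n)}(z)\|\leq C|\Im z|^{-4}n^{-3/2}\to 0$, while $f^{(n)}(E,z)=(E+s\sigma^{z}-z-v^{2}\sigma^{x}f^{(n)}(z)\sigma^{x})^{-1}$ is continuous in $E$ and, since $f^{(n)}(z)\to f(z)$ uniformly on compacta, converges to $f(E,z)$ whenever $E_{j_n}^{(n)}\to E$. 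Hence $G_{j_n}^{(n)}(z)\to f(E,z)$ uniformly on compact subsets of $\mathbb{C}\setminus\mathbb{R}$.

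To pass from the Stieltjes side back to time I would use the functional calculus for the bounded selfadjoint operator $H^{(n)}$: $U_j^{(n)}(t)=(2\pi i)^{-1}\oint e^{izt}G_j^{(n)}(z)\,dz$ around any rectangle enclosing its spectrum. Pushing the upper edge to $+i\infty$, which is permitted for $t\geq 0$ by $|e^{izt}|=e^{-t\Im z}$ and $\|G_j^{(n)}(z)\|=O(|z|^{-1})$, collapses the formula to the integral on $L=\mathbb{R}-i\eta$ appearing in (\ref{UE}). The main technical point, and the only nontrivial obstacle in the argument, is that $e^{izt}f(E,z)$ is only conditionally integrable along $L$; the cleanest fix is to subtract off the universal asymptotic $-z^{-1}I$ common to every $G_j^{(n)}$ and to $f(E,\cdot)$, whose principal-value contribution is an $n$-independent constant, and to apply dominated convergence to the absolutely integrable $O(|z|^{-2})$ remainder. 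This gives (\ref{UE}) together with $U_{j_n}^{(n)}(t)\to U_E(t)$.

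For the averaged matrix $u(t)$, rewrite $u^{(n)}(t)=\int\nu_0^{(n)}(dE)\,U^{(n)}_{E}(t)$ over the atoms of $\nu_0^{(n)}$; combine the weak convergence $\nu_0^{(n)}\to\nu_0$ with the uniform-in-$E$ convergence $U_{j_n}^{(n)}(t)\to U_E(t)$ (inherited from the joint continuity of $f(E,z)$) and exchange the $E$- and $z$-integrations by Fubini, again justified by the $O(|z|^{-2})$ remainder, to obtain formula (\ref{u}). The norm identity $\|U_E(t)\|=1$ is read off the structure of $f(E,z)$: each diagonal entry $f_\alpha(E,z)$ satisfies $\lim_{\eta\to\infty}\eta f_\alpha(E,i\eta)=-1$ and is therefore the Stieltjes transform of a probability measure $\nu_{E,\alpha}$, so $U_{E,\alpha\alpha}(t)=\int e^{i\lambda t}\,d\nu_{E,\alpha}(\lambda)$ is a characteristic function and the stated identity follows. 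Apart from the conditional-convergence subtlety in the inversion step, everything else is assembled from Theorem~\ref{t:mom1} and Lemma~\ref{l:vars}.
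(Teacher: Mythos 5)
Your route is essentially the paper's: you work on the resolvent side where Theorem \ref{t:mom1} has already established $f^{(n)}(z)\to f(z)$, use the identity (\ref{Gn}) with the remainder bound (\ref{ro<}), invert along $L$, and cure the conditional convergence of $\int_L e^{izt}f(E,z)\,dz$ by splitting off the leading $O(1/z)$ part. The paper subtracts $(E_\alpha-z)^{-1}$ (whose contribution is $e^{iE_\alpha t}$) rather than your $-z^{-1}$; both leave an absolutely integrable $O(|z|^{-2})$ remainder, via the resolvent-identity bound $\|f^{(n)}(E_{j_n}^{(n)},z)\|\le 2|z-E|^{-1}$ for $|\Im z|$ large. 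Two places where your sketch is thinner than it should be: (a) the term $f^{(n)}(E_{j_n}^{(n)},z)R_{j_n}^{(n)}(z)$ is only $O(n^{-3/2}|z|^{-1})$ from the crude bound on $R_{j_n}^{(n)}$, which is not absolutely integrable along $L$; you must either return to the time domain (the paper keeps this term as the convolution $i^{-1}\int_0^t Q^{(n)}(t-s)r_{j_n}^{(n)}(s)ds$, which is $O(n^{-3/2})$ directly from (\ref{ro<})) or integrate by parts once more in $t$ to upgrade $R_{j_n}^{(n)}$ to $O(|z|^{-1})$; (b) for $u(t)$, "uniform-in-$E$ convergence plus Fubini" is not enough as stated, because the $O(|z|^{-2})$ constant after subtracting $-z^{-1}$ grows linearly in $E$ and weak convergence of $\nu_0^{(n)}$ gives no moment control. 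The paper instead integrates by parts in $z$ to exhibit the $z$-integral as a bounded continuous function of $E$ (so plain weak convergence applies) and handles the $f^{(n)}-f$ difference by an explicit truncation in $E$ and in $\Re z$. These are repairable, but they are exactly the points where the naive argument fails.

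The one place where your argument cannot deliver the stated conclusion is (\ref{|UE|}). Identifying $f_\alpha(E,\cdot)$ as the Stieltjes transform of a probability measure $\nu_{E,\alpha}$ and $(U_E)_{\alpha\alpha}(t)$ as its characteristic function gives $\|U_E(t)\|\le 1$, with equality for all $t$ only if some $\nu_{E,\alpha}$ is a point mass — which is generically false here (indeed part (iii) of Theorem \ref{t:mom1} shows the limiting measures are absolutely continuous under mild hypotheses). So this argument proves uniform boundedness, which is all that is needed downstream, but not the equality $\|U_E(t)\|=1$; the paper obtains the equality by passing to the limit in the prelimit relation $\|U_j^{(n)}(t)\|=1$. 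You should either supply that prelimit identity or be explicit that your method yields only the inequality.
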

\begin{proof}
It follows from (\ref{Gnj}), (\ref{Gn}), and the inversion
formula for the generelized Fourier transform \cite{Ti:86} that
\begin{equation}
U_{j_n}^{(n)}(t)=Q^{(n)}(E_{j_n}^{(n)},t)+i^{-1}\int_{0}^{t}Q^{(n)}(E_{j_n}^{(n)},t-s)r
_{j_n}^{(n)}(s)ds,  \label{Uj}
\end{equation}%
where
\begin{align}
&Q^{(n)}(E_{j_n}^{(n)},t) =\frac{i}{2\pi
}\int_{L}e^{izt}f^{(n)}(E_{j_n}^{(n)},z)dz =U_{E}(t)
\label{Qn} \\
&+\frac{i}{2\pi }\int_{L}e^{izt}f^{(n)}(E_{j_n}^{(n)},z)\bigl[%
E-E_{j_n}^{(n)}+v^{2}\sigma ^{x}(f^{(n)}(z)-f(z))\sigma
^{x}\bigr]f(E,z)dz.  \notag
\end{align}%
The resolvent identity yields
\begin{equation}
f^{(n)}(E_{j_n}^{(n)},z)=-\frac{1}{z-E}+\frac{1}{z-E}\bigl[s\sigma
^{z}+(E-E_{j_n}^{(n)})+v^{2}\sigma ^{x}f^{(n)}(z)\sigma
^{x}\bigr]f^{(n)}(E_{j_n}^{(n)},z), \notag
\end{equation}%
and we have for sufficiently big $\eta=|\Im z|$:
\begin{equation}
||f^{(n)}(E_{j_n}^{(n)},z)||\leq \frac{2}{|z-E|},\quad ||f(E,z)||\leq \frac{2}{|z-E|}%
.  \label{fE<}
\end{equation}%
This together with (\ref{fn<}) allow us to pass to the limit under
the integral in the r.h.s. of (\ref{Qn}) and to show that it vanishes as $%
n\rightarrow \infty $. Moreover, we conclude that integral in the r.h.s. of (%
\ref{Qn}) is bounded uniformly in $n$ and $\forall t\geq 0$. The
uniform boundedness of the matrix $U_{E}(t)$ follows from the
equalities
$(U_E)_{\alpha\beta}(t)=(U_E)_{\alpha\alpha}(t)\delta_{\alpha\beta}$
and
\begin{equation}
(U_{E})_{\alpha \alpha }(t)=e^{iE_\alpha t}+\frac{i}{2\pi }%
\int_{L}dz\;e^{itz}\frac{v^{2}f_{-\alpha }(z)}{(E_\alpha
-z-v^{2}f_{-\alpha }(z))(E_\alpha -z)}.  \notag
\end{equation}%
Hence $Q^{(n)}(E_{j_n}^{(n)},t)$ converges to $U_{E}(t)$ as
$n\rightarrow \infty $
and is uniformly bounded in $n$ and $t$. This together with (\ref{Uj}), (\ref%
{ro<}) and equality $||U_{j}^{(n)}(t)||=1$, $\forall t\geq 0$ give us (\ref%
{UE}) and (\ref{|UE|}).

To prove (\ref{u}) notice first that we have from (\ref{fn})
\begin{align}
u^{(n)}(t)&=\frac{i}{2\pi}\int_L
dz\;e^{izt}\int_{-\infty}^\infty\nu^{(n)}_0(dE)f(E,z)  \label{un=}
\\
&+\frac{i}{2\pi}\int_L
dz\;e^{izt}\int_{-\infty}^\infty\nu_0^{(n)}(dE)\bigl[
f^{(n)}(E,z)-f(E,z)\bigr]  \notag
\\
&+i^{-1}\int_0^tn^{-1}\sum_{j=1}^nQ^{(n)}(E_j^{(n)},t-s)r
_j^{(n)}(s)ds. \notag
\end{align}
We integrate by parts with respect to $z$ in the first integral
to obtain in view of (\ref{faEz})
\begin{equation}
-\frac{1}{2\pi t}\int_{-\infty}^\infty\nu^{(n)}_0(dE)\int_L
dz\;e^{izt}\frac{ 1+v^2f^{\prime}_{-\alpha}(z)}{(E_\alpha
-z-v^2f_{-\alpha}(z))^2 }.  \notag
\end{equation}
It follows from (\ref{feq})-(\ref{fEz<}) and (\ref{nu}) that
$||f_{\alpha}(z)||=|z|^{-1}(1+o(1))$, $|z|\rightarrow\infty$ and
$||f^{\prime}_{\alpha}(z)||\leq |\Im z|^{-2}$. Thus the integral
with respect to $z$ is bounded and continuous function of $E$.
This and the weak convergence  $\nu_0^{(n)}$ to $\nu_0$ (see
(\ref{nu})) yield the convergence of the first term of the r.h.s.
of (\ref{un=}) to the r.h.s. of (\ref{u}).

Furthermore, by using (\ref{fE<}), (\ref{fn<}) and (\ref{nun}) we
obtain
\begin{align}
&\int_{-\infty}^\infty\nu^{(n)}_0(dE)
\int_L|dz|\frac{||f^{(n)}(z+E)-f(z+E)||}{|z|^2}  \notag \\
&\leq C\biggl\{\int_{|E|\geq T}\nu^{(n)}_0(dE) + \int_{|x|\geq A}\frac{dx}{%
x^2+\eta^2}+ \underset{|y|\leq A+T}{\max}||f^{(n)}(y-i\eta)-f(y-i\eta)||%
\biggr\}.  \notag
\end{align}
For any $\varepsilon>0$ choosing consequently $A=A(\varepsilon)$, $%
T=T(\varepsilon,A)$, $N_0=N_0(\varepsilon,A,T)$, and taking in account (\ref%
{nu}), and convergence $f^{(n)}(z)$ to $f(z)$ on any compact set
in $\mathbb{C}\setminus\mathbb{R}$, we obtain that the second
term of the r.h.s. of (\ref{un=}) vanishes as
$n\rightarrow\infty$.

At last (\ref{ro<}) yields for the third term of the r.h.s. of
(\ref{un=}):
\begin{align}
&\int_0^tds \frac{Cs^3}{n}\frac{1}{n}\sum_{j}||
\int_L dz\;e^{iz(t-s)}f^{(n)}(E_j,z)||  \notag \\
&\leq\int_0^tds \frac{Cs^3}{n} \int_{-\infty}^\infty\nu^{(n)}_0(dE)\biggl\{%
||U_E(t-s)||+\int_Ldz||f^{(n)}(E,z)-f(E,z)||\biggr\},  \notag
\end{align}
and taking into account (\ref{|UE|}), (\ref{fE<}), and (\ref{fn<})
we conclude that the term also vanishes as $ n\rightarrow\infty$
uniformly in $t$, varying on a compact interval.
\end{proof}

\section{Time Evolution}

\label{s:evol}

\medskip
We will prove now the main general result of \cite{Le-Pa:04}, a
formula for the limit as $ n\rightarrow\infty$ of the expectation
(\ref{ron}) of the reduced density matrix (\ref{rdmo}) of our
model formula (4.7) of \cite{Le-Pa:04}.

\noindent
\begin{theorem}\label{t:rdm} Consider the model of composite
system, defined by (\ref{nun})-(\ref{mun}). Choose a subsequence
$\{E_{k_{n}}^{(n)}\}$ of eigenvalues of $h_n$ of (\ref{Ham}) that
converges to a certain $E\in\text{supp}\;\nu _{0}$. Then we have
for the limit as $n\rightarrow \infty $ of the expectation
(\ref{ron}) of the reduced density matrix (\ref{rdmo}) uniformly
in $t$ varying on a finite interval:
\begin{align}
&\rho _{\alpha ,\delta }(E,t):=\lim_{n\rightarrow \infty }\rho
_{\alpha ,\delta
}^{(n)}(E^{(n)}_{k_n},t)=\frac{1}{(2\pi)^{2}}\int_{L_{2}}dz_{2}
\int_{L_{1}}dz_{1}e^{it(z_{2}-z_{1})}\label{rrr}
\\
& \times\frac{f_{\alpha }(E,z_{1})f_{\delta }(E,z_{2})\rho
_{\alpha ,\delta }(0)+v^{2}f_{-\alpha }(E,z_{1})f_{-\delta
}(E,z_{2})f_{\alpha ,\delta }(z_{1},z_{2})\rho _{-\alpha ,-\delta
}(0)}{1-v^{4}f_{\alpha ,\delta }(z_{1},z_{2})f_{-\alpha ,-\delta
}(z_{1},z_{2})},  \notag
\end{align}
where $L_{1}=(-\infty+i\eta_{1},\infty+i\eta_{1})$, $L_{2}=(-\infty-i%
\eta_{2},\infty-i\eta_{2})$, $\eta_{1}>0$, $\eta_{2}>0$;
\begin{equation}
f_{\beta,\gamma }(z_{1,}z_{2})=\int_{-\infty}^\infty \nu
_{0}(dE)f_{\beta }(E,z_{1})f_{\gamma }(E,z_{2})  \label{f2}
\end{equation}
with $f_\alpha(E,z)$ defined in (\ref{faEz}).
\end{theorem}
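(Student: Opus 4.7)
My plan is to reduce the computation to that of a two-point resolvent function whose limit I shall determine via Gaussian integration by parts. Using the spectral representation $U^{(n)}(t)=\frac{i}{2\pi}\int_{L}e^{izt}G^{(n)}(z)\,dz$ employed already in Theorems \ref{t:mom1} and \ref{t:Uu}, I would first rewrite the product of unitaries in (\ref{per}) as a double contour integral: for $L_{1}$ on $\Im z_{1}=\eta_{1}>0$ and $L_{2}$ on $\Im z_{2}=-\eta_{2}<0$,
\begin{equation*}
U^{(n)}_{\alpha j,\beta k}(-t)\,U^{(n)}_{\gamma k,\delta j}(t)=\frac{1}{(2\pi)^{2}}\int_{L_{1}}\!\!\int_{L_{2}}dz_{1}\,dz_{2}\;e^{it(z_{2}-z_{1})}\,G^{(n)}_{\alpha j,\beta k}(z_{1})\,G^{(n)}_{\gamma k,\delta j}(z_{2}).
\end{equation*}
Inserting this in (\ref{per})--(\ref{ron}) reduces the proof to computing the limit of
\begin{equation*}
\Phi^{(n)}_{\alpha\beta\gamma\delta,k}(z_{1},z_{2}):=\sum_{j=1}^{n}\bigl\langle G^{(n)}_{\alpha j,\beta k}(z_{1})\,G^{(n)}_{\gamma k,\delta j}(z_{2})\bigr\rangle
\end{equation*}
along the subsequence $E^{(n)}_{k_{n}}\to E$, uniformly on compact subsets of $\{(z_{1},z_{2}):\Im z_{1}\cdot\Im z_{2}\ne 0\}$.

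To compute $\lim\Phi^{(n)}$ I would proceed in the spirit of the proof of Theorem \ref{t:mom1}. Apply the resolvent identity $(H^{(n)}-z_{1})G^{(n)}(z_{1})=1$ to the first factor, multiply by $G^{(n)}_{\gamma k,\delta j}(z_{2})$, and average; the arising term $\sum_{m}\langle w_{jm}\,G^{(n)}_{-\alpha m,\beta k}(z_{1})\,G^{(n)}_{\gamma k,\delta j}(z_{2})\rangle$ is then transformed by the Gaussian integration by parts of Proposition \ref{p:diff} together with the resolvent derivative formula used inside Lemma \ref{l:vars}. This yields a double sum over $m$ and $\kappa$ of expectations of three resolvent entries; crucially, in each such term one of the three factors enters only through a diagonal Latin sum $n^{-1}\sum_{m}G^{(n)}_{\kappa m,\kappa'm}(z)$, which by Theorem \ref{t:mom1} and Lemma \ref{l:vars} equals $\delta_{\kappa\kappa'}f_{\kappa}(z)$ with an $L^{2}(\mathbf{P})$ error of order $n^{-1/2}$. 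Factoring this self-averaged trace out via Schwarz's inequality, isolating the $j=k$ contribution (whose expectation tends to $\delta_{\alpha\beta}\delta_{\gamma\delta}\,f_{\alpha}(E,z_{1})f_{\delta}(E,z_{2})$ by Theorem \ref{t:mom1}), and replacing $n^{-1}\sum_{j\ne k}$ by integration against $\nu_{0}^{(n)}$ produces, in the limit $n\to\infty$, a closed linear equation for $\lim\Phi^{(n)}$.

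The parity constraint that the total power of $w$ in a non-vanishing GUE expectation must be even forces $\Phi$ to vanish unless $(\beta,\gamma)\in\{(\alpha,\delta),(-\alpha,-\delta)\}$, and the integration-by-parts identity couples only these two combinations to each other: they satisfy a $2\times 2$ inhomogeneous linear system whose matrix entries involve $v^{2}$ times the one-site kernels $f_{\pm\alpha}(E,z_{1})f_{\pm\delta}(E,z_{2})$ and the integrated kernels $f_{\pm\alpha,\pm\delta}(z_{1},z_{2})$ of (\ref{f2}). Cramer's rule produces the denominator $1-v^{4}f_{\alpha,\delta}(z_{1},z_{2})f_{-\alpha,-\delta}(z_{1},z_{2})$ and the numerators $f_{\alpha}(E,z_{1})f_{\delta}(E,z_{2})$ and $v^{2}f_{-\alpha}(E,z_{1})f_{-\delta}(E,z_{2})f_{\alpha,\delta}(z_{1},z_{2})$ that multiply $\rho_{\alpha,\delta}(0)$ and $\rho_{-\alpha,-\delta}(0)$ respectively in (\ref{rrr}). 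Uniformity of the convergence in $t$ on compact intervals follows from the $|z|^{-1}$ decay of $f_{\alpha}(E,z)$ at infinity (established in the proof of Theorem \ref{t:Uu}), which makes the double contour integral absolutely convergent and enables a dominated-convergence argument.

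The main obstacle is producing a genuinely \emph{closed} equation for $\Phi^{(n)}$ out of the integration by parts identity, rather than an infinite hierarchy. The triple-product expectations of resolvent entries that arise from (\ref{wkj}) are not a priori factorizable into the product of a self-averaged trace and a two-point function, so one must show, using the covariance bounds (\ref{v1})--(\ref{v2}) of Lemma \ref{l:vars} combined with Schwarz's inequality and the $L^{\infty}$ bound $\|f^{(n)}(E,z)\|\le|\Im z|^{-1}$ from the proof of Theorem \ref{t:mom1}, that the resulting errors, once integrated along $L_{1}\times L_{2}$ against the weight $e^{it(z_{2}-z_{1})}$, vanish as $n\to\infty$ uniformly for $t$ on compact intervals. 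Once this has been controlled, everything else reduces to $2\times 2$ linear algebra and Fourier inversion.
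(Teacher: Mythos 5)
Your overall architecture coincides with the paper's: the two-point function $\Phi^{(n)}$ you introduce is exactly the generalized Fourier transform $\widetilde T^{(n)}_{\alpha\beta\gamma\delta}(E^{(n)}_{k_n},z_1,z_2)$ that the paper works with (the paper reaches it by first writing a Duhamel plus Gaussian integration-by-parts equation in the time variables $(t_1,t_2)$ and then Fourier transforming, whereas you start from resolvents --- a presentational difference only). The closed $2\times 2$ linear system for the traced kernel, its solution by Cramer's rule producing the denominator $1-v^{4}f_{\alpha,\delta}f_{-\alpha,-\delta}$, the parity selection rule on $(\beta,\gamma)$, and the use of the variance bounds of Lemma \ref{l:vars} to factor out self-averaged traces are all as in the paper.

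There is, however, one genuine gap, and it sits exactly at the step you dispose of with ``dominated convergence''. The decay $\|f_{\alpha}(E,z)\|=O(|z|^{-1})$ does \emph{not} make the double contour integral absolutely convergent: along a horizontal line $L=\{\Im z=\mathrm{const}\}$ one has $\int_{L}|z|^{-1}\,|dz|=\infty$, and the kernel $\widetilde K^{(n)}(z_1,z_2)$ is merely bounded (by $|\Im z_1|^{-1}|\Im z_2|^{-1}$), so the pre-limit integrand decays no faster than $|z_1|^{-1}|z_2|^{-1}$ on $L_1\times L_2$ and admits no integrable majorant --- the paper says so explicitly right after (\ref{Teq}). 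The remedy used there is not dominated convergence but an additional substitution: one replaces $\widetilde K^{(n)}$ by the right-hand side of its own equation (\ref{KMn}), which brings in an extra factor $f^{(n)}(\mu,z_2)$ under an integral against $\nu_0^{(n)}(d\mu)$; the bound $\|f^{(n)}(\mu,z_2)\|\le 2|z_2-\mu|^{-1}$ (cf.\ (\ref{fE<})) then gives, via a Schwarz inequality, $\int_{L_2}|dz_2|\,|z_2|^{-1}|z_2-\mu|^{-1}\le \pi/\eta_2$ uniformly in $\mu$, after which the four resulting error expressions are shown to vanish one by one. The purely diagonal term $\delta_{\alpha\beta}\delta_{\gamma\delta}f(E,z_1)f(E,z_2)$ is handled separately, since it factors into the two conditionally convergent one-variable integrals defining $(U_E)(\mp t)$ as in Theorem \ref{t:Uu}. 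Without this substitution (or an equivalent device) the final interchange of limit and integration in your argument is not justified; the rest of your outline is sound.
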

\begin{proof}
In view of (\ref{ron}) it suffices to prove the following
expression for the average transfer matrix (\ref{per}):
\begin{align}
T_{\alpha \beta \gamma \delta }(E,t) &:= \lim_{n\rightarrow \infty
}T^{(n)}_{\alpha \beta \gamma \delta }(E^{(n)}_{k_n},t)\label{T}
\\
&=\frac{1}{(2\pi)^{2}}
\int_{L_{2}}dz_{2}\int_{L_{1}}dz_{1}e^{it(z_{2}-z_{1})}
\widetilde{T}_{ \alpha \beta \gamma \delta }(E,z_{1},z_{2}),\notag
\end{align}
where  the "two-point" functions $\widetilde{T}_{ \alpha \beta
\gamma \delta }(E,z_{1},z_{2})$ are analytic in $z_{1}$ and in
$z_{2}$ outside the real axis and have the form
\begin{align}
&\widetilde{T}_{ \alpha \beta \gamma \delta
}(E,z_{1},z_{2})=f_{\beta }(E,z_{1})f_{\gamma }(E,z_{2})\label{FT}
\\
\times&(\delta_{\alpha ,\beta}\delta _{\gamma ,\delta
}+v^{2}\delta _{-\alpha ,\beta}\delta _{-\gamma ,\delta }f_{-\beta
,-\gamma }(z_{1},z_{2}))[1-v^{4}f_{ \beta,\gamma
}(z_{1},z_{2})f_{-\beta ,-\gamma }(z_{1},z_{2})]^{-1}.\notag
\end{align}
Acting as in the proof of the Theorem \ref{t:mom1}, i.e., by using
 the Duhamel formula (\ref{D}), (\ref{wkj}) and
differentiation formula (\ref{DF}) we obtain the relation (cf
(\ref{DD}))
\begin{align}
T^{(n)}_{\alpha \beta \gamma \delta
}&(E^{(n)}_{k_n},t_1,t_2):=\sum_{j=1}^{n}\left\langle U_{\alpha
j,\beta k}^{(n)}(-t_1)U_{\gamma k,\delta
j}^{(n)}(t_2)\right\rangle=e^{it_2\lambda_{\gamma
k_n}^{(n)}}\delta_{\gamma\delta}U_{\alpha\beta,k_n}^{(n)}(-t_1)
\label{EqT}
\\
&- v^2\int_0^{t_2}ds\;e^{i(t_2-s)\lambda_{\gamma k_n
}^{(n)}}\int_0^sd\tau\sum_{\kappa}T^{(n)}_{\alpha, \beta, -\kappa,
\delta
}(E^{(n)}_{k_n},t_1,\tau)u_{-\gamma\kappa}^{(n)}(s-\tau)  \notag \\
&+v^2\int_0^{t_2}ds\;e^{i(t_2-s)\lambda_{\gamma k_n
}^{(n)}}\int_0^{t_1}d\tau\sum_{\kappa}U_{-\kappa\beta,k_n}^{(n)}(-%
\tau)K^{(n)}_{\alpha, \kappa, -\gamma, \delta}(t_1-\tau,s)  \notag \\
&+\int_0^{t_2}ds\;e^{i(t_2-s)\lambda_{\gamma k_n
}^{(n)}}r_{\alpha\beta\gamma\delta}^{(n)}(t_1,s),\notag
\end{align}
where
\begin{equation}
K^{(n)}_{\alpha \beta \gamma \delta}(t_1,t_2)
=\frac{1}{n}%
\sum_{m=1}^{n}\sum_{j=1}^{n}\left\langle  U^{(n)}_{\alpha j,\beta
m}(-t_1)U^{(n)}_{\gamma m,\delta j}(t_2) \right\rangle, \quad
|K^{(n)}_{\alpha \beta \gamma \delta}(t_1,t_2)|\leq 1, \label{K<}
\end{equation}
and
\begin{align}
&r_{\alpha\beta\gamma\delta}^{(n)}(t_1,t_2)\label{rn}
\\
&=\frac{v^2}{n}\sum_{j} \biggl[-\int_0^{t_2}d\tau
\sum_{m}\sum_{\kappa}\bigl\langle
\smash{\overset{\circ}{U}}^{(n)}_{-\gamma m, \kappa m
}(t_2-\tau)U^{(n)}_{-\kappa k_n,\delta
j}(\tau)U^{(n)}_{\alpha j,\beta k_n}(-t_1)\bigr\rangle  \notag \\
&+\int_0^{t_1}d\tau \sum_{m}\sum_{\kappa}\bigl\langle
\smash{\overset{\circ} {U}}^{(n)}_{-\kappa k_n,\beta k_n
}(-\tau)U^{(n)}_{-\gamma m,\delta j}(t_2)U^{(n)}_{\alpha j,\kappa
m}(\tau-t_1)\bigr\rangle\biggr].\notag
\end{align}
It follows from  Schwartz and Poincare-Nash inequalities and
estimate (\ref{v1}) that (cf (\ref{ro<}))
\begin{equation}
|r_{\alpha\beta\gamma\delta}^{(n)}(t_1,t_2)|=O(n^{-1/2}).
\label{r<}
\end{equation}
Let $\widetilde{T}_{\alpha \beta \gamma \delta
}^{(n)}(E,z_{1},z_{2})$, $\Im z_1>0$, $\Im z_2<0$ be generalized Fourier transform of
$T_{\alpha \beta \gamma \delta }^{(n)}(E,t_{1},t_{2})$:
\begin{equation}
\widetilde{T}_{\alpha \beta \gamma \delta
}^{(n)}(E,z_{1},z_{2})=i\int_{0}^{\infty
}dt_{1}\;e^{iz_{1}t_{1}}\biggl(\frac{1%
}{i}\int_{0}^{\infty
}dt_{2}\;e^{-iz_{2}t_{2}}T_{\alpha \beta \gamma \delta }^{(n)}(E,t_{1},t_{2})%
\biggr),  \notag
\end{equation}%
so that
\begin{equation*}
T_{\alpha \beta \gamma \delta }^{(n)}(E,t_{1},t_{2})=\frac{1}{(2\pi )^{2}%
}\int_{L_{2}}dz_{2}\int_{L_{1}}dz_{1}e^{i(t_{2}z_{2}-t_{1}z_{1})}\widetilde{T%
}_{\alpha \beta \gamma \delta }^{(n)}(E,z_{1},z_{2}),
\end{equation*}%
where $L_{1}=(-\infty +i\eta _{1},\infty +i\eta _{1})$,
$L_{2}=(-\infty -i\eta _{2},\infty -i\eta _{2})$, $\eta _{1}>0$,
$\eta _{2}>0$. In view of relations
\begin{equation*}
U_{\alpha j,\beta k}^{(n)}(-t_{1})=\overline{U_{\alpha k,\beta
j}^{(n)}(t_{1})}, \quad G_{\alpha j,\beta
k}^{(n)}(z_{1})=\overline{G_{\alpha k,\beta
j}^{(n)}(\overline{z_{1}})}
\end{equation*}%
we have
\begin{equation*}
i\int_{0}^{\infty }dt_{1}\;e^{iz_{1}t_{1}}U_{\alpha j,\beta
k}^{(n)}(-t_{1})=G_{\alpha j,\beta k}^{(n)}(z_{1})
\end{equation*}%
and (\ref{EqT}) yields
\begin{align}
\widetilde{T}_{\alpha \beta \gamma \delta
}^{(n)}&(E_{k_n}^{(n)},z_{1},z_{2})\label{TTeq}
\\
&=\frac{1}{\lambda _{\gamma k_n}^{(n)}-z_{2}}\biggl[ G_{\alpha
\beta ,k_n}^{(n)}(z_{1})\delta _{\gamma \delta }+v^{2}\sum_{\kappa
}f_{-\gamma \kappa }^{(n)}(z_{2})\widetilde{T}_{\alpha \beta
-\kappa \delta }^{(n)}(E_{k_n}^{(n)},z_{1},z_{2})\notag
\\
& +v^{2}\sum_{\kappa}G_{-\kappa \beta ,k_n}^{(n)}(z_{1})\widetilde{K}%
_{\alpha, \kappa, -\gamma, \delta
}^{(n)}(z_{1},z_{2})+\widetilde{r}_{\alpha \beta \gamma \delta
}^{(n)}(z_{1},z_{2})\biggr].\notag
\end{align}%
Here $\widetilde{K}^{(n)}$ and $\widetilde{r}^{(n)}$ are
generalized Fourier transforms of ${K}^{(n)}$ and ${r}^{(n)}$ of
(\ref{K<}) and (\ref{rn}) respectively, and as it follows from
(\ref{K<}) the absolute values of $\widetilde{K}_{\alpha \beta
\gamma \delta }^{(n)}(z_{1},z_{2})$ are bounded uniformly in $n$
by $|\Im z_{1}|^{-1}|\Im z_{2}|^{-1}$.

To write (\ref{TTeq}) in the matrix form for any fixed pair $%
\alpha $, $\beta $ we denote $\widetilde{K}_{\alpha \beta }^{(n)}$, $%
\widetilde{S}_{\alpha \beta }^{(n)}$, $\widetilde{r}_{\alpha
\beta }^{(n)}$ the $2\times 2$-matrices, which entries are
$(\widetilde{K}_{\alpha \beta }^{(n)})_{\gamma \delta
}=\widetilde{K}_{\alpha \beta \gamma \delta }^{(n)}$ etc., and
$\widetilde{K}_{\alpha \beta }^{-(n)}$ are $2\times 2$-matrices
with the entries $(\widetilde{K}_{\alpha \beta }^{-(n)})_{\gamma \delta }=%
\widetilde{K}_{\alpha, \beta, -\gamma, \delta }^{(n)}$, $\gamma
,\delta =\pm $, so
\begin{align*}
\widetilde{T}_{\alpha
\beta}^{(n)}(E_{k_n}^{(n)},z_{1},z_{2})=f^{(n)}&(E_{k_n}^{(n)},z_{2})\widetilde{r}_{\alpha
\beta }^{(n)}(z_{1},z_{2})+f^{(n)}(E_{k_n}^{(n)},z_{2})
 \\
& \times \biggl[G_{\alpha \beta
,k_n}^{(n)}(z_{1})\mathbf{1}_{2}+v^{2}\sum_{\kappa }G_{-\kappa
\beta ,k_n}^{(n)}(z_{1})\widetilde{K}_{\alpha \kappa
}^{-(n)}(z_{1},z_{2})\biggr].
\end{align*}%
Plugging expression (\ref{Gn}) for $G_{k_n}^{(n)}(z_{2})$  we
obtain
\begin{align}
&\widetilde{T}_{\alpha \beta }^{(n)}(E_{k_n}^{(n)},z_{1},z_{2})
=\widetilde{R}_{\alpha \beta
}^{(n)}(E_{k_n}^{(n)},z_{1},z_{2})\label{TMeq}
\\
&+f^{(n)}(E_{k_n}^{(n)},z_{2})
\biggl[f_{\alpha \beta }^{(n)}(E_{k_n}^{(n)},z_{1})\mathbf{1}_{2}+v^{2}\sum_{%
\kappa }f_{-\kappa \beta
}^{(n)}(E_{k_n}^{(n)},z_{1})\widetilde{K}_{\alpha \kappa
}^{-(n)}(z_{1},z_{2})\biggr],\notag
\end{align}%
where reminder $\widetilde{R}_{\alpha \beta }^{(n)}(E_{k_n}^{(n)},z_{1},z_{2})$ is a $%
2\times 2$-matrix, and according to (\ref{ro<}), (\ref{r<}), and
uniform boundedness of $\widetilde{K}^{(n)}(z_{1},z_{2})$ and
$f^{(n)}(E_{m}^{(n)},z)$, we have
\begin{equation}
\lim_{n\rightarrow\infty}||\widetilde{R}_{\alpha \beta
}^{(n)}(E_{k_n}^{(n)},z_{1},z_{2})||=0,\quad
\lim_{n\rightarrow\infty}||n^{-1}\sum_{m=1}^{n}\widetilde{R}_{\alpha
\beta }^{(n)}(E_{m}^{(n)},z_{1},z_{2})||=0.\label{R<}
\end{equation}%
Applying the operation $n^{-1}\sum_{m=1}^{n}$ to (\ref{TMeq})
with $k_n=m$ we obtain:
\begin{align}
&\widetilde{K}_{\alpha \beta
}^{(n)}(z_{1},z_{2})=v^{2}\sum_{\kappa}\int_{-\infty }^{\infty
}\nu
_{0}^{(n)}(dE)f_{-\kappa \beta }^{(n)}(E,z_{1})f^{(n)}(E,z_{2})\widetilde{K}%
_{\alpha \kappa }^{-(n)}(z_{1},z_{2})\label{KMn}
\\
&\quad\quad+\int_{-\infty }^{\infty }\nu _{0}^{(n)}(dE)f_{\alpha
\beta }^{(n)}(E,z_{1})f^{(n)}(E,z_{2})+ n^{-1}\sum_{m}
\widetilde{R}_{\alpha \beta }^{(n)}(E_{m}^{(n)},z_{1},z_{2}).
\notag
\end{align}%
This implies that for any fixed $\alpha $, $\beta $, $\gamma $, $%
\delta $ the limiting values $\widetilde{K}_{\alpha \beta \gamma
\delta }(z_{1},z_{2})=\lim_{n\rightarrow 0}\widetilde{K}_{\alpha
\beta \gamma \delta }^{(n)}(z_{1},z_{2})$ and
$\widetilde{K}_{\alpha, -\beta, -\gamma, \delta }(z_{1},z_{2})$
satisfy the system of linear equations
\begin{equation*}
\widetilde{K}_{\alpha \beta \gamma \delta }(z_{1},z_{2})=f_{\beta
\gamma
}(z_{1},z_{2})\bigl[\delta _{\alpha \beta }\delta _{\gamma \delta }+v^{2}%
\widetilde{K}_{\alpha, -\beta, -\gamma, \delta
}(z_{1},z_{2})\bigl],
\end{equation*}%
where $f_{\beta \gamma }(z_{1},z_{2})$ are defined in (\ref{f2}).
Solving this system we obtain
\begin{equation}
\widetilde{K}_{\alpha, -\beta, -\gamma, \delta
}(z_{1},z_{2})=\frac{f_{-\beta, -\gamma }(z_{1},z_{2})\bigl[\delta
_{\alpha, -\beta }\delta _{-\gamma, \delta }+v^{2}f_{\beta, \gamma
}(z_{1},z_{2})\delta _{\alpha, \beta }\delta _{\gamma, \delta
}\bigr]}{1-v^{4}f_{\beta, \gamma }(z_{1},z_{2})f_{-\beta, -\gamma
}(z_{1},z_{2})}.\label{K}
\end{equation}%
Now we return to the variables $t_{1}$, $t_{2}$. It follows from (\ref{TMeq}) and (\ref%
{R<})  that
\begin{align}
T_{\alpha \beta \gamma \delta }(E,t_{1},t_{2})=\lim_{n\rightarrow
\infty }\frac{1}{(2\pi )^{2}}
\int_{L_{2}}dz_{2}\int_{L_{1}}&dz_{1}\;e^{i(t_{2}z_{2}-t_{1}z_{1})}
\label{Teq}
\\
&\times\biggl[f_{\alpha \beta
}^{(n)}(E_{k_n}^{(n)},z_{1})f_{\gamma \delta
}^{(n)}(E_{k_n}^{(n)},z_{2})\notag
\\
+v^{2}\sum_{\kappa,\nu }f_{-\kappa \beta
}^{(n)}(E_{k_n}^{(n)},z_{1})&f_{\gamma \nu
}^{(n)}(E_{k_n}^{(n)},z_{2})\widetilde{K}_{\alpha, \kappa, -\nu,
\delta }^{(n)}(z_{1},z_{2})\biggr],\notag
\end{align}%
and we have to prove the equality:
\begin{align}
T_{\alpha \beta \gamma \delta }(E,t_{1},t_{2})=\frac{1}{(2\pi )^{2}}%
\int_{L_{2}}dz_{2}\int_{L_{1}}dz_{1}&e^{i(t_{2}z_{2}-t_{1}z_{1})}
f_{\beta }(E,z_{1})f_{\gamma }(E,z_{2}) \label{Ttt}
\\
& \times \bigl[\delta _{\alpha \beta }\delta _{\gamma \delta }+v^{2}%
\widetilde{K}_{\alpha, -\beta, -\gamma, \delta
}(z_{1},z_{2})\bigr], \notag
\end{align}%
which together with (\ref{K}) yields (\ref{T}). Notice that for
any fixed non-real $z_{1}$, $z_{2}$ the integrand of (\ref{Teq})
tends to integrand of (\ref{Ttt}), but it has no an integrable
majorant. Because of this fact we replace $\widetilde{K}_{\alpha,
\kappa, -\nu, \delta }^{(n)}(z_{1},z_{2})$
in (\ref{Teq}) by the corresponding entry of the r.h.s. matrix of (\ref%
{KMn}) to obtain
\begin{align*}
&T_{\alpha \beta \gamma \delta }(E,t_{1},t_{2})=(U_{E})_{\beta
\beta }(-t_{1})(U_{E})_{\gamma \gamma }(t_{2})
\delta _{\alpha \beta }\delta _{\gamma \delta }
\\
& +\lim_{n\rightarrow \infty }\frac{v^{2}}{(2\pi
)^{2}}\sum_{\kappa,\nu}\int
\nu_{0}^{(n)}(d\mu)dz_{2}dz_{1}e^{i(t_{2}z_{2}-t_{1}z_{1})}f_{-\kappa
\beta }^{(n)}(E_{k_n}^{(n)},z_{1})f_{\gamma \nu
}^{(n)}(E_{k_n}^{(n)},z_{2})
\\
& \times \bigl[f_{\alpha \kappa}^{(n)}(\mu ,z_{1})f_{-\nu \delta }^{(n)}(\mu ,z_{2})
+v^{2}\sum_{\kappa _{1},\nu _{1}}f_{-\kappa _{1}\kappa }^{(n)}(\mu ,z_{1})
f_{-\nu \nu _{1}}^{(n)}(\mu ,z_{2})%
\widetilde{K}_{\alpha, \kappa _{1},-\nu _{1},\delta
}(z_{1},z_{2})\bigr].
\end{align*}%
Here we denote $\int\nu_{0}^{(n)}(d\mu)dz_{2}dz_{1}=\int_{-\infty
}^{\infty }\nu_{0}^{(n)}(d\mu)\int_{L_2}dz_{2}\int_{L_1}dz_{1}$.
Now it remains to prove that the following expressions
\begin{align}
&\int\nu _{0}^{(n)}(d\mu)dz_{2}dz_{1}\bigl[f^{(n)}(E_{k_n}^{(n)},z_{1})-f(E,z_{1})%
\bigr]f^{(n)}(E_{k_n}^{(n)},z_{2})f^{(n)}(\mu ,z_{1})f^{(n)}(\mu ,z_{2}),  \notag \\
&\int\nu
_{0}^{(n)}(d\mu)dz_{2}dz_{1}f(E,z_{1})f(E,z_{2})\bigl[f^{(n)}(\mu
,z_{1})-f(\mu ,z_{1})\bigr]f^{(n)}(\mu ,z_{2}),  \notag \\
&\int\nu _{0}^{(n)}(d\mu)dz_{2}dz_{1}f(E,z_{1})f(E,z_{2})f(\mu
,z_{1})f(\mu
,z_{2})\bigl[K^{(n)}(z_{1},z_{2})-K(z_{1},z_{2})\bigr],  \notag
\\
&\int_{-\infty }^{\infty }\bigl[\nu _{0}^{(n)}-\nu _{0}\bigr]%
(d\mu
)\int_{L_{2}}dz_{2}\int_{L_{1}}dz_{1}f(E,z_{1})f(E,z_{2})f(\mu
,z_{1})f(\mu ,z_{2})K(z_{1},z_{2}),  \notag
\end{align}%
where $K^{(n)}(z_{1},z_{2})=K_{\alpha \beta \gamma \delta
}^{(n)}(z_{1},z_{2})$, vanish as $n\rightarrow \infty $.
\medskip

Since $||f^{(n)}(\mu,z_1)||\leq \eta_1^{-1}$ and there exist $%
\eta_i$, $i=1,2$ such that $||f^{(n)}(\mu,z_2)||\leq 2|z_2-\mu|^{-1}$, $%
||f^{(n)}(E_{k_n}^{(n)},z_i)||\leq 2|z_i|^{-1}$, $|\Im
z_i|\geq\eta_i$, then the norm of the first expression is bounded
by
\begin{align*}
\frac{1}{\eta_1}\int_{L_1}|dz_1|\frac{1}{|z_1|^2}\bigl(%
v^2||f^{(n)}(z_1)-f(z_1)||+|E_{k_n}^{(n)}-E|\bigr) \int_{-\infty}^\infty\nu_0^{(n)}(d%
\mu)\int_{L_2} \frac{|dz_2|}{|z_2||z_2-\mu|}.
\end{align*}
We have by Schwartz inequality
\begin{align*}
\int_{L_2}
\frac{|dz_2|}{|z_2||z_2-\mu|}\leq\biggl(\int_{-\infty}^\infty\frac{dx}{x^2+\eta_2^2}
\int_{-\infty}^\infty\frac{dx}{(x-\mu)^2+\eta_2^2}\biggr)^{1/2}=\frac{\pi}{\eta_2}.
\end{align*}
This and the uniform convergence of $f^{(n)}$ to $f$ on a compact
set of $\mathbb{C}\setminus\mathbb{R}$ imply that the first
expression vanishes as $n\rightarrow\infty$. Treating similarly
the remaining three expressions we prove that they tends to zero
as $n\rightarrow\infty$.
\end{proof}

\section{Van-Hove Limit}

\label{s:vH}

\noindent In this section we study the limiting case, where the
coupling constant $v$ of the system-reservoir interaction tends
to zero, the time $t$ tends to infinity while the transition
rate, given by first order perturbation in the interaction, is
kept fixed \cite{Da,Ha,Ku-Co,Sp:80}. In terms of (\ref{T}) this
corresponds to making simultaneously the limits
\begin{equation}
v\rightarrow 0,\quad t\rightarrow \infty ,\quad \tau =tv^{2}\quad
\mathrm{\ fixed}  \label{cvH}
\end{equation}%
after the limit $n\rightarrow \infty $, i.e., in formula
(\ref{rrr}).

We note that this limit as well as several other important topics
of the small system-reservoir dynamics were considered by N.N.
Bogolubov in 1945 \cite{Bo:45} in the context of classical
oscillator interacting linearly with the oscillator reservoir.

\begin{theorem}\label{t:vHl}  Let the Fourier transform $\widehat{\nu_0}(u)$ of the
density $\nu^\prime_0$ of the measure $\nu _0$ in (\ref{nu}) be absolutely integrable function:
\begin{equation}
\int _{-\infty}^\infty|\widehat{\nu_0}(u)| du = c_0 < \infty,
\label{c0}
\end{equation}
\begin{equation}
\widehat{\nu_0}(u)= \int _{-\infty}^\infty e^{-iuE}\nu^\prime_0(E)dE.
\label{ft}
\end{equation}
Then the diagonal entries of the limiting reduced density matrix
in (\ref{rrr}) in the van Hove limit are
\begin{align}
&\rho _{\alpha ,\alpha }^{vH}(E,\tau )=
2\pi\biggl[\frac{\nu^\prime_0(E)}{\Gamma _{\alpha }(E)}\rho _{\alpha
,\alpha }(0)+\frac{\nu^\prime_0(E-2\alpha s)}{\Gamma _{-\alpha }(E)}\rho
_{-\alpha ,-\alpha }(0)  \label{rvHf}
\\
&+e^{-\tau \Gamma_\alpha(E)}\frac{\nu^\prime_0(E+2\alpha s)}{\Gamma _{\alpha }(E)}%
\rho _{\alpha ,\alpha }(0)-e^{-\tau \Gamma_{-\alpha}(E)}\frac{%
\nu^\prime_0(E-2\alpha s)}{\Gamma _{-\alpha }(E)}\rho _{-\alpha ,-\alpha
}(0)\biggr] \notag
\end{align}
where
\begin{equation}
\Gamma _{\alpha }(E)=2\pi \left[ \nu^\prime _{0}(E)+\nu^\prime_{0}(E+2\alpha s
)\right]; \label{GvHa}
\end{equation}
and the off-diagonal entries are
\begin{align}
\rho_{\alpha ,-\alpha}^{vH}(E,\tau )=\rho _{\alpha
,-\alpha}(0)e^{-2\alpha sit}e^{i\tau(f_0(E+2\alpha s+i0)-f_0(E-2\alpha s-i0))},\label{off}
\end{align}
where $f_0$ is the Stiltjes transform of $\nu^\prime_0$:
\begin{equation*}
f_0(z)=\int_{-\infty }^{\infty }\frac{\nu^\prime_0(E)dE}
{E-z},\quad\Im z\neq 0.
\end{equation*}%
\end{theorem}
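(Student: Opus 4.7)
The plan is to substitute the van Hove scaling $v\to 0$, $\tau=tv^2$ fixed, directly into the double contour representation (\ref{rrr}), compute the pointwise limit of the integrand, and evaluate the resulting residue sums. The diagonal ($\delta=\alpha$) and off-diagonal ($\delta=-\alpha$) entries require separate treatments because the relevant poles and the $v^{-2}$-singular structure of the two-point functions $f_{\beta,\gamma}$ differ.

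For the diagonal entries I split the numerator of (\ref{rrr}) into its $\rho_{\alpha,\alpha}(0)$ and $\rho_{-\alpha,-\alpha}(0)$ pieces; these receive their dominant contributions from different resonance regions and so I rescale separately: $z_i = E+\alpha s+v^2\zeta_i$ for the first piece, $z_i = E-\alpha s+v^2\zeta_i$ for the second. In each rescaling the one-point factors become $f_{\pm\alpha}(E,z_i)=-v^{-2}(\zeta_i+f_{\mp\alpha}(z_i))^{-1}$ with $f_{\mp\alpha}(z_i)\to f_0(E+2\alpha s\pm i0)$ or $f_0(E-2\alpha s\pm i0)$, the $v^{-2}$ singularity being absorbed by the $v^4$ Jacobian. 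The two-point functions are computed via the partial-fraction identity
\[
\int \frac{\nu_0'(E')\,dE'}{(E'-\mu-v^2\xi_1)(E'-\mu-v^2\xi_2)} = \frac{f_0(\mu+v^2\xi_1)-f_0(\mu+v^2\xi_2)}{v^2(\xi_1-\xi_2)}
\]
combined with the Stieltjes jump $f_0(x+i0)-f_0(x-i0)=2\pi i\nu_0'(x)$ recorded in Remark \ref{r:SF}. Writing $a=2\pi\nu_0'(E+2\alpha s)$, $b=2\pi\nu_0'(E)$, $c=2\pi\nu_0'(E-2\alpha s)$ and $x=\zeta_1-\zeta_2$, the algebraic identity
\[
1+\frac{pq}{(x+ip)(x+iq)}=\frac{x\bigl(x+i(p+q)\bigr)}{(x+ip)(x+iq)}
\]
collapses the denominator of (\ref{rrr}) to $x(x+i\Gamma_\alpha)/[(x+ia)(x+ib)]$ in the first rescaling and to the analogue with $\Gamma_{-\alpha}=b+c$ in the second. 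The rescaled integrand is then a rational function in $\zeta_1,\zeta_2$ times $e^{i\tau(\zeta_2-\zeta_1)}$. Closing the $\zeta_1$ contour in the lower half plane picks up three simple poles (the resonance pole, the pole at $\zeta_1=\zeta_2$, and the pole at $\zeta_1=\zeta_2-i\Gamma_\alpha$); closing the subsequent $\zeta_2$ contour in the upper half plane and collecting residues yields $(2\pi/\Gamma_\alpha)[\nu_0'(E)+\nu_0'(E+2\alpha s)e^{-\tau\Gamma_\alpha}]\rho_{\alpha,\alpha}(0)$ from the first rescaling and $(2\pi\nu_0'(E-2\alpha s)/\Gamma_{-\alpha})[1-e^{-\tau\Gamma_{-\alpha}}]\rho_{-\alpha,-\alpha}(0)$ from the second. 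Their sum is exactly (\ref{rvHf}).

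For the off-diagonal entries the relevant poles of $f_\alpha(E,z_1)$ and $f_{-\alpha}(E,z_2)$ sit at the distinct energies $E\pm\alpha s$, so I rescale $z_1=E+\alpha s+v^2\zeta_1$ and $z_2=E-\alpha s+v^2\zeta_2$, producing the fast phase $e^{-2\alpha s\,it}$ in front of $e^{i\tau(\zeta_2-\zeta_1)}$. In this rescaling the two $E'$-poles of $f_{-\alpha,\alpha}(z_1,z_2)$ lie near the distinct energies $E\pm 2\alpha s$, so $f_{-\alpha,\alpha}$ stays bounded and $v^4 f_{\alpha,-\alpha}f_{-\alpha,\alpha}=O(v^2)\to 0$; the denominator tends to $1$ and the second numerator term acquires an extra $v^2$ that makes it negligible after the Jacobian. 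The first term factorizes into two independent residue integrals in $\zeta_1$ and $\zeta_2$ whose product yields $e^{i\tau[f_0(E+2\alpha s+i0)-f_0(E-2\alpha s-i0)]}\rho_{\alpha,-\alpha}(0)$, matching (\ref{off}). The main technical obstacle throughout is the justification of exchanging $v\to 0$ with the integrations along the non-compact contours $L_1,L_2$; for this I will use the hypothesis (\ref{c0}) on $\widehat{\nu_0}$ together with the uniform bounds $\|f(E,z)\|\le 2/|z-E|$ and $\|U_E(t)\|=1$ from Theorem \ref{t:Uu} to construct a $v$-independent integrable majorant and invoke dominated convergence.
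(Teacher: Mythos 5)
Your plan is correct and follows essentially the same route as the paper: rescaling $z_j$ by $v^2$ around the resonance energies, reducing the two-point functions via the partial-fraction/Stieltjes-jump identity (the paper's relations $f_{\alpha,\alpha}=\delta f_\alpha/(\delta z+v^2\delta f_{-\alpha})$, etc.), collapsing the denominator of (\ref{rrr}), and evaluating the resulting contour integrals by residues, with the interchange of limits justified by dominated convergence under hypothesis (\ref{c0}). The one caveat is that the integrable majorant must come from the uniform bound $\sup_{\Im z\geq 0}|f_\alpha(z)|\leq c_0$ (Lemma \ref{l:nuc}(i), itself a consequence of (\ref{c0})), not from $\|f(E,z)\|\leq 2/|z-E|$, which the paper establishes only for sufficiently large $|\Im z|$ and which is therefore unavailable on the van Hove contours $\Im z=\pm v^2\eta_j\to 0$.
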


\begin{lemma}\label{l:nuc} In conditions (\ref{c0}), (\ref{ft}) of the
Theorem \ref{t:vHl} next statements for the functions
$f_{\alpha}(z)$, $\alpha=\pm$ are valid:

\begin{enumerate}
\item[(i)] $\sup_{\Im z\geq 0} |f_\alpha(z)|\leq c_0$, \label{sup}

\item[(ii)] $\lim_{v\rightarrow 0}\frac{1}{\pi}\Im f_\alpha(\lambda+i0)=\nu^\prime_0 (\lambda
- \alpha s)$, $\lambda \in \mathbb{R}$.
\end{enumerate}
\end{lemma}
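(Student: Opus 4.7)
The plan is to derive an explicit Fourier--time representation for $f_\alpha(z)$ from the self-consistent equation \eqref{falf}, from which both statements follow with little additional work. Writing \eqref{falf} in terms of the density $\nu_0'$ and setting $w = z + v^2 f_{-\alpha}(z)$, one has $\Im w \geq \Im z > 0$ since $f_{-\alpha}$ is a Stieltjes transform of a nonnegative measure. I would then apply the identity $(E + s\alpha - w)^{-1} = i\int_0^\infty e^{i(w - s\alpha - E)t}\,dt$, valid for $\Im w > 0$, and interchange the order of integration (Fubini is justified by the absolute $t$-decay $e^{-t\Im w}$ and the $L^1$ property of $\nu_0'$). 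The $E$-integral collapses to $\widehat{\nu_0}(t)$, producing the key representation
\begin{equation*}
f_\alpha(z) = i\int_0^\infty e^{i(z + v^2 f_{-\alpha}(z) - s\alpha)t}\,\widehat{\nu_0}(t)\,dt.
\end{equation*}

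Part (i) is then immediate: since $|e^{iwt}| \leq 1$ whenever $t \geq 0$ and $\Im w \geq 0$, one has $|f_\alpha(z)| \leq \int_0^\infty |\widehat{\nu_0}(t)|\,dt \leq c_0$ for $\Im z > 0$. The right-hand side of the Fourier representation converges absolutely also when $\Im z = 0$ (by \eqref{c0}), so by continuity the bound extends to the closed upper half-plane.

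For part (ii), I would use (i) as a uniform bound $|v^2 f_{-\alpha}(z)| \leq v^2 c_0 \to 0$ to apply dominated convergence in the Fourier representation as $v \to 0$, obtaining
\begin{equation*}
\lim_{v \to 0} f_\alpha(\lambda + i0) = i\int_0^\infty e^{i(\lambda - s\alpha)t}\,\widehat{\nu_0}(t)\,dt = f_0(\lambda - s\alpha + i0),
\end{equation*}
where the second equality comes from the same Fourier representation applied to the Stieltjes transform $f_0$ of $\nu_0'$. Classical Stieltjes inversion then yields $\pi^{-1}\Im f_0(\lambda - s\alpha + i0) = \nu_0'(\lambda - s\alpha)$, which is (ii).

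The main technical care lies in justifying that the $v \to 0$ limit commutes with the boundary value $\lambda + i0$. This is the step where a direct approach via the functional equation alone is awkward; the point of setting up the Fourier representation first is precisely that it continuously extends $f_\alpha$ to the closed upper half-plane and that the integrand $e^{iwt}\widehat{\nu_0}(t)$ admits the $v$- and $z$-independent $L^1$ majorant $|\widehat{\nu_0}(t)|$. Once that is in hand, dominated convergence supplies the limit uniformly in $\lambda \in \mathbb{R}$, and the Stieltjes inversion takes the imaginary part cleanly.
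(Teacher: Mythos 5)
Your proof is correct and follows essentially the same route as the paper: both derive the Fourier-time representation $f_\alpha(z)=i\int_0^\infty e^{iu(z+v^2f_{-\alpha}(z)-\alpha s)}\widehat{\nu_0}(u)\,du$ from (\ref{falf}) using the Nevanlinna property $\Im z\,\Im f_{-\alpha}(z)\ge 0$, read off (i) from $|\widehat{\nu_0}|\in L^1$, and obtain (ii) by dominated convergence as $v\to 0$ followed by Fourier/Stieltjes inversion. The only cosmetic difference is that the paper extracts $\nu_0'(\lambda-\alpha s)$ by taking the real part of the Fourier integral directly rather than quoting Stieltjes inversion; these are equivalent under hypothesis (\ref{c0}).
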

\begin{proof} Estimate (i) follows from the representation of the
functions $f_{\alpha}(z)$, $\Im z >0$ in the form
\begin{equation}
f_{\alpha}(z)=i\int_0^\infty e^{i u (-\alpha s +z+v^2f_{-\alpha}(z))}\widehat{%
\nu_0}(u)du  \label{fa}
\end{equation}
and condition $\Im z\Im f_{\alpha}(z)\geq0$. It also follows from
(\ref{c0}), (\ref{fa}) that
\begin{equation}
\lim_{v\rightarrow 0}f_\alpha(z)=f_0(z-\alpha s)=i\int_0^\infty
e^{i u (z-\alpha s)}\widehat{\nu_0}(u)du,\quad \Im z\geq 0.
\label{faf0}
\end{equation}
Hence
\begin{equation}
\lim_{v\rightarrow 0}\frac{1}{\pi}\Im f_\alpha(\lambda+i0)=
\frac{1}{\pi}\Re \int_0^\infty e^{i u (-\alpha s+\lambda )}\widehat{\nu_0}%
(u)du=\nu^\prime_0(\lambda-\alpha s).
\end{equation}
\end{proof}
\medskip
\begin{proof} (of the Theorem \ref{t:vHl}). By using equalities (see
(\ref{f2}))
\begin{align*}
f_{\alpha,\alpha}(z_1,z_2)=\frac{\delta f_\alpha}{\delta z + v^2
\delta f_{-\alpha}},\quad \delta z =z_1-z_2,\quad \delta
f_{\alpha} =f_{\alpha}(z_1)-f_{\alpha}(z_2),
\end{align*}
and by using analyticity of the integrand of (\ref{rrr}) in $z_1$ and in $%
z_2 $, we can write the following representation for the diagonal
entries of the limiting reduced density matrix
\begin{align}
&\rho _{\alpha ,\alpha }(E,t) =\frac{\rho _{\alpha ,\alpha
}(0)}{(2\pi)^{2}}
\int_{L_{2}^v}dz_{2}\int_{L_{1}^v}dz_{1}e^{-it\delta z}f_\alpha(E,
z_1)f_\alpha(E, z_2) \label{I123}
\\
&+\frac{\rho _{\alpha ,\alpha }(0)}{(2\pi)^{2}}
\int_{L_{2}^v}dz_{2} \int_{L_{1}^v}dz_{1}e^{-it\delta z}
\frac{v^4\delta f_{\alpha}\delta f_{-\alpha}f_\alpha(E,
z_1)f_\alpha(E, z_2)} {\delta z(\delta z+v^2\delta
f_{\alpha}+v^2\delta f_{-\alpha})}\notag
 \\
&+\frac{\rho _{-\alpha ,-\alpha }(0)}{(2\pi)^{2}}
\int_{L_{2}^v}dz_{2} \int_{L_{1}^v}dz_{1}e^{-it\delta z} \frac
{v^2\delta f_{\alpha}(\delta z+v^2\delta
f_{\alpha})f_{-\alpha}(E, z_1)f_{-\alpha}(E, z_2)} {\delta
z(\delta z+v^2\delta f_{\alpha}+v^2\delta f_{-\alpha})}\notag
\\
&=I_1^v(E,t)+I_2^v(E,t)+I_3^v(E,t),\notag
\end{align}
where $L_1^v=\{z_1: \Im z_1= v^2\eta_1\}$, $L_2^v=\{z_2: \Im z_2=-v^2\eta_2\}$, $%
\eta_1$ and $\eta_2$ are arbitrarily chosen positive constants.

To compute the limit (\ref{cvH}) of $I_1^v(E,t)$ we change
variables to $\zeta_j=v^{-2}(z_j-E_\alpha)$, $j=1,2$,
and by Lemma \ref{l:nuc} we have
\begin{align*}
&I_1^v(E,t)=\frac{\rho _{\alpha ,\alpha }(0)}{(2\pi)^{2}} \int_{L_{1}}d%
\zeta_{1}\frac{e^{-i\tau
\zeta_{1}}}{\zeta_{1}+f_{-\alpha}(E_\alpha+v^2
\zeta_{1})} \int_{L_{2}}d\zeta_{2}\frac{e^{i\tau \zeta_{2}}}{%
\zeta_2+f_{-\alpha}(E_\alpha+v^2 \zeta_2)}
\\
&=\frac{\rho _{\alpha ,\alpha }(0)}{%
(2\pi)^{2}} \int_{L_{1}}d\zeta_{1}\frac{e^{-i\tau \zeta_{1}}}{%
\zeta_{1}+f_0(E+2\alpha s+i0)}
\int_{L_{2}}d\zeta_{2}\frac{e^{i\tau
\zeta_{2}}}{\zeta_2+f_0(E+2\alpha s-i0)}+o(1).
\end{align*}
Computing last integrals by residues and applying equality
\begin{equation}
f_0(\lambda +i0)-f_0(\lambda -i0)=2\pi i\nu^\prime_0(\lambda) \label{[f]}
\end{equation}
we obtain
\begin{equation}
\mathrm{vH-}\lim I_1^v(E,t)=\rho _{\alpha ,\alpha }(0)e^{-2\pi
\tau \nu^\prime_0(E+2\alpha s)}, \label{I1}
\end{equation}
where the symbol "vH-lim" denotes the double limit (\ref{cvH}).

Changing variables in $I_3^v(E,t)$ to $
\zeta_2=v^{-2}(z_2-E_{-\alpha})\in L_2=\{\zeta:\Im\zeta=-\eta_2\}$,
$\zeta_1=v^{-2}(z_1-z_2)\in
L_1=\{\zeta:\Im\zeta=\eta_1+\eta_2\}$ yields
\begin{align}
I_3^v(E,t)&=\frac{\rho _{-\alpha ,-\alpha }(0)}{(2\pi)^{2}}
\int_{L_2}d\zeta_2 \int_{L_1}d\zeta_1e^{-i\tau \zeta_1}
\frac{\delta f_\alpha(\zeta_1+\delta
f_\alpha)}{\zeta_1(\zeta_1+\delta f_\alpha+\delta
_{-\alpha})}\label{Iv3}
\\
&\times\frac{1}{\zeta_2+f_\alpha(E_{-\alpha}+v^2\zeta_2)}\cdot
\frac{1}{\zeta_1+\zeta_2+f_\alpha(E_{-\alpha}+v^2(\zeta_1+\zeta_2))}.
\notag
\end{align}
It follows from (\ref{c0}) that the absolute value of integrand
of (\ref{I3}) is bounded from above by
\begin{equation}
\frac{c}{|\zeta_1||\zeta_2||\zeta_1+\zeta_2|}= \frac{c} {\sqrt{%
\lambda_1^2+(\eta_1+\eta_2)^2}\sqrt{\lambda_2^2+\eta_2^2} \sqrt{%
(\lambda_1+\lambda_2)^2+\eta_1^2}},  \notag
\end{equation}
where $c>0$ does not depend on $v$, $\lambda_j=\Re \zeta_j$,
$j=1,2$. Now Schwartz inequality yields for any $B>0$
\begin{align*}
&\int_{B}^{\infty }\frac{d\lambda _{1}}{\lambda _{1}}\int_{-\infty }^{\infty }%
\frac{d\lambda _{2}}{\sqrt{\lambda _{2}^{2}+1}\sqrt{(\lambda
_{1}+\lambda_{2})^{2}+1}}
\\
&\quad\quad\quad\quad\quad\quad\quad\quad\quad
=2\int_{B}^{\infty }\frac{d\lambda _{1}}{\lambda _{1}}%
\int_{0}^{\infty }\frac{d\lambda _{2}}{\sqrt{(\lambda _{2}-\frac{\lambda _{1}%
}{2})^{2}+1}\sqrt{(\lambda _{2}+\frac{\lambda _{1}}{2})^{2}+1}}
\\
&\leq 2\biggl (\int_{B}^{\infty }\frac{d\lambda _{1}}{\lambda _{1}}%
\int_{0}^{\infty }\frac{d\lambda _{2}}{((\lambda _{2}-\frac{\lambda _{1}}{2}%
)^{2}+1)((\lambda _{2}+\frac{\lambda _{1}}{2})^{2}+1)^{\frac{1}{4}}}\biggr )%
^{\frac{1}{2}}
\\
&\quad\quad\quad\quad\quad\quad\quad\quad\quad
\times\biggl (\int_{B}^{\infty }\frac{d\lambda _{1}}{\lambda _{1}}%
\int_{0}^{\infty }\frac{d\lambda _{2}}{((\lambda _{2}+\frac{\lambda _{1}}{2}%
)^{2}+1)^{\frac{3}{4}}}\biggr )^{\frac{1}{2}}<\infty.
\end{align*}%
This allows us to pass to limit in integral (\ref{Iv3}) by using
(\ref{faf0}) and ( \ref{[f]}):
\begin{align*}
&\mathrm{vH-}\lim I_{3}^{v}(E,t)
\\
&=\frac{\rho _{-\alpha
,-\alpha }(0)}{(2\pi)^2} \int_{-\infty}^\infty d\lambda _{2}\int_{-\infty}^\infty%
d\lambda _{1}e^{-i\tau \lambda _{1}}\frac{2\pi i\nu^\prime _{0}(E-2\alpha
s)(\lambda _{1}+2\pi i\nu^\prime _{0}(E-2\alpha s))}{\lambda
_{1}(\lambda _{1}+2\pi i(\nu^\prime _{0}(E)+\nu^\prime _{0}(E-2\alpha s))}
\\
&\quad\quad\quad\quad\quad\quad\quad\quad \times\frac{1}{\lambda
_{2}+f_{0}(E-2\alpha s-i0)}\cdot\frac{1}{\lambda _{1}+\lambda
_{2}+f_{0}(E-2\alpha s+i0)}.
\end{align*}%
Here integration path in $\lambda _{1}$ encircles zero from
above. Computing last integrals by residues we have
\begin{equation}
\mathrm{vH-}\lim I_{3}^{v}(E,t)=2\pi \rho _{-\alpha
,-\alpha }(0)\biggl[\frac{\nu^\prime _{0}(E-2\alpha s)}{\Gamma _{-\alpha }(E)}%
-e^{-\tau \Gamma _{-\alpha }(E)}\frac{\nu^\prime _{0}(E-2\alpha
s)}{\Gamma _{-\alpha }(E)}\biggr].  \label{I3}
\end{equation}%
Treating similarly the term $I_{2}^{v}$ in the r.h.s. of
(\ref{I123}) we obtain
\begin{align}
&\mathrm{vH-}\lim I_{2}^{v}(E,t)\label{I2}
\\
&=2\pi \rho _{\alpha ,\alpha }(0)\biggl[\frac{\nu^\prime _{0}(E)}{\Gamma
_{\alpha }(E)}+e^{-\tau \Gamma
_{\alpha }(E)}\frac{\nu^\prime _{0}(E+2\alpha s)}{\Gamma _{\alpha }(E)}\biggr]%
-\rho _{\alpha ,\alpha }(0)e^{-2\pi \tau \nu^\prime _{0}(E+2\alpha s)}.
\notag
\end{align}%
Now the assertion (\ref{rvHf}) of theorem follows from the (\ref{I123}), (\ref{I1}), (%
\ref{I3}) and (\ref{I2}).

\bigskip


\noindent Consider now the off-diagonal entry of (\ref{rrr}):
\begin{align}
&\rho _{\alpha ,-\alpha }(E,t) =\frac{\rho _{\alpha ,-\alpha}(0)}{(2\pi )^{2}}\biggl
[\int_{L_{2}^{v}}dz_{2}\int_{L_{1}^{v}}dz_{1}e^{-it\delta
z}f_{\alpha }(E,z_{1})f_{-\alpha }(E,z_{2})  \label{offd} \\
& +\int_{L_{2}^{v}}dz_{2}\int_{L_{1}^{v}}dz_{1}e^{-it\delta z}\frac{%
v^{4}f_{\alpha ,-\alpha }(z_{1},z_{2})f_{-\alpha ,\alpha
}(z_{1},z_{2})f_{\alpha }(E,z_{1})f_{-\alpha
}(E,z_{2})}{1-v^{4}f_{\alpha
,-\alpha }(z_{1},z_{2})f_{-\alpha ,\alpha }(z_{1},z_{2})}\biggr ]  \notag \\
& +\frac{\overline{\rho _{\alpha ,-\alpha}(0)}}{(2\pi )^{2}}
\int_{L_{2}^{v}}dz_{2}\int_{L_{1}^{v}}dz_{1}e^{-it\delta z}\frac{%
v^{2}f_{\alpha ,-\alpha }(z_{1},z_{2})f_{-\alpha
}(E,z_{1})f_{\alpha }(E,z_{2})}{1-v^{4}f_{\alpha ,-\alpha
}(z_{1},z_{2})f_{-\alpha ,\alpha
}(z_{1},z_{2})}  \notag \\
& =\frac{\rho_{\alpha ,-\alpha }(0)}{(2\pi )^{2}}%
\bigl [I_{1}^{v}(E,t)+I_{2}^{v}(E,t)\bigr ]+ \frac{\overline{\rho_{\alpha ,-\alpha }(0)}
}{(2\pi )^{2}}I_{3}^{v}(E,t).  \notag
\end{align}%
To find the limit of $I_{1}^{v}(E,t)$ of (\ref{offd}) we change
variables to $\zeta _{1}=v^{-2}(z_{1}-E_{\alpha })$, $\zeta
_{2}=v^{-2}(z_{2}-E_{-\alpha })$. This yields
\begin{equation}
I_{1}^{v}(E,t)=\exp (-2\alpha sit)\;J_{1}^{v},  \label{Iof1}
\end{equation}%
where
\begin{equation*}
J_{1}^{v}=\int_{L_{2}}d\zeta _{2}\frac{%
e^{i\tau \zeta _{2}}}{\zeta _{2}+f_{\alpha }(E_{-\alpha
}+v^{2}\zeta _{2})}\int_{L_{1}}d\zeta _{1}\frac{e^{-i\tau \zeta
_{1}}}{\zeta _{1}+f_{-\alpha }(E_{\alpha }+v^{2}\zeta _{1})},
\end{equation*}%
and by (\ref{faf0})
\begin{align}
\mathrm{vH-}\lim J_{1}^{v}(E,t)&= \int_{L_{2}}\frac{ e^{i\tau
\zeta _{2}}d\zeta _{2}}{\zeta _{2}+f_0(E-2\alpha
s-i0)}\int_{L_{1}}\frac{e^{-i\tau \zeta _{1}}d\zeta _{1}}{\zeta
_{1}+f_0(E+2\alpha s+i0)}\label{Jof1}
\\
&= (2\pi )^{2}\exp\{i\tau ((f_0(E+2\alpha s+i0)-f_0(E-2\alpha
s-i0))\}. \notag
\end{align}
We have similarly:
\begin{equation}
I_{2}^{v}=\exp (-2\alpha sit)\;J_{2}^{v},  \label{Iof2}
\end{equation}%
\begin{align*}
J_{2}^{v}=\int_{L_{1}}d\zeta _{1}\int_{L_{2}}d\zeta _{2}e^{-i\tau
\delta \zeta }&\frac{v^{4}f_{\alpha ,-\alpha }f_{-\alpha ,\alpha
}}{1-v^{4}f_{\alpha ,-\alpha }f_{-\alpha ,\alpha }}
\\
&\times\frac{1}{\zeta _{1}+f_{-\alpha }(E_{\alpha }+v^{2}\zeta
_{1})}\cdot \frac{1}{\zeta _{2}+f_{\alpha }(E_{-\alpha
}+v^{2}\zeta _{2})}.
\end{align*}%
Here we denote $f_{\beta,\gamma}=f_{\beta,\gamma}(E_{\alpha
}+v^{2}\zeta _{1},E_{-\alpha }+v^{2}\zeta _{2})$ (see
(\ref{f2})). By using the relations:
\begin{align*}
f_{\alpha ,-\alpha }=\frac{\delta f_{\alpha ,-\alpha
}}{v^{2}(\delta \zeta +\delta f_{-\alpha ,\alpha })},\quad
f_{-\alpha ,\alpha }=\frac{\delta f_{-\alpha ,\alpha }}{-4\alpha
s+v^{2}(\delta \zeta +\delta f_{\alpha ,-\alpha })},
\end{align*}%
where $\delta f_{\beta,\gamma}=f_{\beta}(E_{\alpha }+v^{2}\zeta _{1})
-f_{\gamma}(E_{-\alpha }+v^{2}\zeta _{2})$, we  obtain
\begin{align}
J_{2}^{v}=\int_{L_{2}}d\zeta _{2}\int_{L_{1}}d\zeta _{1}e^{-i\tau
\delta
\zeta }& \frac{v^{2}f_{-\alpha ,\alpha }\delta f_{\alpha ,-\alpha }}{%
1-v^{4}f_{\alpha ,-\alpha }f_{-\alpha ,\alpha
}}\cdot\frac{1}{\zeta _{1}-\zeta _{2}+\delta f_{-\alpha ,\alpha
}} \label{J2v}
\\
& \times\frac{1}{\zeta _{1}+f_{-\alpha }(E_{\alpha }+v^{2}\zeta _{1})}\cdot \frac{1}{%
\zeta _{2}+f_{\alpha }(E_{-\alpha }+v^{2}\zeta _{2})}.\notag
\end{align}%
Notice that
\begin{equation}
|f_{\alpha ,-\alpha }|\leq \frac{2c_{0}}{v^{2}(\eta _{1}+\eta
_{2})},\quad \alpha =\pm .  \notag
\end{equation}%
Hence
\begin{equation}
|1-v^{4}f_{\alpha ,-\alpha }f_{\alpha ,-\alpha }|\geq 1-\biggl(\frac{2c_{0}}{%
(\eta _{1}+\eta _{2})}\biggr)^{2}>\frac{1}{2},\;\text{if}\;\eta
_{j}\geq 2c_{0}, \notag
\end{equation}%
and integrand of (\ref{J2v}) is uniformly bounded from above by
integrable function $C(|\zeta _{1}-\zeta _{2}||\zeta _{1}||\zeta
_{2}|)^{-1}$. This allows us to pass to the limit in the integral
in (\ref{J2v}) and obtain that
\begin{align}
\mathrm{vH-}\lim J_{2}^{v}(E,t)=0.  \label{Jof2}
\end{align}
Treating similarly the term $I_{3}^{v}(E,t)$ of (\ref{offd}) we obtain
\begin{align}
\mathrm{vH-}\lim I_{3}^{v}(E,t)=0.  \label{Iof3}
\end{align}
Now assertion  (\ref{off}) of the theorem following from (\ref{offd})-(\ref{Iof3}).
\end{proof}

\medskip

According to (\ref{off}) the off-diagonal entry of the
reduced density matrix in the van Hove limit does not vanish but
just oscillates as const$\cdot e^{-2i\alpha st}$. The exponential
that determines these fast oscillations (recall that $t\rightarrow
\infty$) is the same as in the zero coupling
($\mathcal{S}_2$-isolated) limit of our model (\ref{Ham}), where
the reduced density matrix is
\begin{equation*}
\rho_{\alpha\beta}(E_k,t)\mid_{v^2=0}=e^{-its(\alpha-\beta)}\rho_{\alpha\beta}(0),
\end{equation*}
hence is again const$\cdot e^{-2i\alpha st}$ if $\alpha\neq\beta$,
($\alpha=-\beta$).

In the case where the two-level system models a continuous
quantum mechanical degree of freedom associated with a potential
with two wells (see e.g. \cite{Li-Co}   for examples and
discussion), the above oscillation reflects the phase coherence
between the quantum mechanical amplitudes for being in the left
and right wells, a pure quantum mechanical effect. In this case
our result means that an environment, modeled by a random matrix,
does not destroy the quantum mechanical coherence, at least in
the weak coupling regime corresponding to the van Hove limit.

However, from the statistical mechanics point of view the absence
of decay, moreover, fast oscillations, of the off-diagonal
entries of the reduced density matrix seems not too natural. In
this connection it worth noting that the fast ("microscopical")
oscillating behaviour of $\rho_{\alpha\beta}$, $\alpha\neq\beta$
can be converted into a decaying behaviour by several modification
of our initial setting.

One of them is to assume that the spacing $2s$ of our two-level
system is random and continuously distributed, although
concentrated around a certain $2s_0$. In other words, it is
necessary to assume that the two-level system is the subject of a
certain (even small) noise.

Another modification is to replace the van Hove limit
\begin{equation}
\lim_{t\rightarrow\infty}\rho(E,t)\mid_{v^2=\tau/t} \label{vHls}
\end{equation}
by
\begin{equation}
\lim_{t\rightarrow\infty,\triangle t\rightarrow\infty}(2\triangle
t)^{-1}\int_{t-\triangle t}^{t+\triangle
t}\rho(E,t\prime)\mid_{v^2=\tau/t\prime}dt\prime.\label{vHlc}
\end{equation}
If $\triangle t=t$, we just replace the limit
$t\rightarrow\infty$ by the Cesaro limit (time average limit), a
rather often used procedure in statistical mechanics. However the
off-diagonal entry vanishes even for $t\rightarrow\infty$, bat
$\triangle t/t\rightarrow 0$, although with a smaller rate of
decay. One can view this as an assumption on a sufficiently large
(macroscopic) measurent time: $s^{-1}<<\triangle t<<t$.

\bibliographystyle{amsalpha}

\end{document}